\documentclass{ieeetmlcn}
\usepackage{paper}
\tikzset{
  box/.style={draw, rounded corners=2pt, minimum height=0.5cm, inner sep=2.5pt, align=center, font=\scriptsize\bfseries},
  encoderbox/.style={box, fill=orange!30, draw=orange!70!black},
  decoderbox/.style={box, fill=green!25, draw=green!55!black},
  normbox/.style={box, fill=yellow!40, draw=yellow!45!black},
  lbl/.style={align=center, inner sep=1pt},
  arr/.style={-{Latex[length=1.6mm]}, line width=0.7pt},
}
\newcommand{\drawSystemModel}[1]{%
  \begin{tikzpicture}[
      scale=#1, transform shape,
      font=\scriptsize,
      pics/ant/.style={code={
        \draw[black!85, line width=0.7pt] (0,-0.26) -- (0,0.10);
        \draw[black!85, line width=0.7pt] (-0.14,0.30) -- (0,0.10) -- (0.14,0.30);
        \draw[black!85, line width=0.7pt] (-0.16,-0.26) -- (0.16,-0.26);
      }},
      pics/rf/.style={code={
        \foreach \r in {0.28,0.42,0.56}
          \draw[blue!55!black, line width=0.6pt] (-38:\r) arc (-38:38:\r);
      }},
    ]

    \node[lbl] (input) at (-3.95, 0) {Input\\[-1pt] $x$};
    \node[encoderbox] (enc) at (-2.9, 0) {Encoder\\[-1pt] $\mathcal{E}$};
    \node[normbox] (pn) at (-1.35, 0) {Power\\[-1pt] Norm.};
    \node[decoderbox] (dec) at (2.25, 0) {Decoder\\[-1pt] $\mathcal{D}$};
    \node[lbl] (output) at (3.4, 0) {Output\\[-1pt] $\hat{x}$};

    \pic at (-0.15, 0) {ant};
    \pic at (-0.15, 0.10) {rf};
    \pic at (1.05, 0) {ant};

    \draw[arr] (input) -- (enc);
    \draw[arr] (enc) -- (pn);
    \draw[arr] (pn.east) -- (-0.38, 0) node[midway, above, lbl] {$s$};
    \draw[arr] (1.28, 0) -- (dec.west) node[midway, above, lbl] {$\hat{s}$};
    \draw[arr] (dec) -- (output);

    \node[lbl] at (-0.15, -0.46) {TX};
    \node[lbl] at (1.05, -0.46) {RX};
    \node[lbl] at (0.45, -0.78) {ADALM-PlutoSDR, over the air};

  \end{tikzpicture}%
}

\newcommand{\drawIbpBox}[1]{%
  \begin{tikzpicture}[
      scale=#1, transform shape,
      font=\scriptsize,
      >=Latex,
    ]
    \draw[->] (-1.6,0) -- (1.8,0) node[right] {$z$};
    \draw[->] (0,-0.55) -- (0,1.7) node[above] {$\hat{z}$};

    \fill[blue!15, opacity=0.45] (-1.2,0) rectangle (1.2,1.2);
    \draw[thick, blue!70!black] (-1.2,0) rectangle (1.2,1.2);

    \draw[thick, black] (-1.4,0) -- (0,0) -- (1.4,1.4);

    \coordinate (L) at (-1.2,0);
    \coordinate (U) at (1.2,1.2);
    \fill (L) circle (1.2pt);
    \node[above left, inner sep=1.5pt] at (-1.25,0.02) {$(\underline{z},0)$};
    \fill (U) circle (1.2pt);
    \node[right] at (1.42,1.28) {$(\overline{z},\overline{z})$};
    \node[below] at (-1.2,-0.05) {$\underline{z}$};
    \node[below] at (1.2,-0.05) {$\overline{z}$};
    \node[above left, inner sep=2pt] at (-0.05,1.2) {$\overline{\hat{z}}$};
    \node[below left, inner sep=2pt] at (0,0) {$\underline{\hat{z}}{=}0$};
  \end{tikzpicture}%
}

\newcommand{\drawReluWedge}[1]{%
  \begin{tikzpicture}[
      scale=#1, transform shape,
      font=\scriptsize,
      >=Latex,
    ]
    \draw[->] (-1.6,0) -- (1.8,0) node[right] {$z$};
    \draw[->] (0,-0.65) -- (0,1.7) node[above] {$\hat{z}$};

    \fill[blue!15, opacity=0.45]
      (-1.2,-0.42) -- (1.2,0.42) -- (1.2,1.2) -- (-1.2,0) -- cycle;

    \draw[thick, black] (-1.4,0) -- (0,0) -- (1.4,1.4);

    \coordinate (L) at (-1.2,0);
    \coordinate (U) at (1.2,1.2);
    \draw[dotted, black!55, line width=0.6pt] (U) -- (1.2,0);
    \draw[dotted, black!55, line width=0.6pt] (U) -- (0,1.2);
    \draw[dotted, black!55, line width=0.6pt] (L) -- (-1.2,-0.42);
    \fill (L) circle (1.2pt);
    \node[above left, inner sep=1.5pt] at (-1.25,0.02) {$(\underline{z},0)$};
    \fill (U) circle (1.2pt);
    \node[right] at (1.42,1.28) {$(\overline{z},\overline{z})$};
    \node[below right, inner sep=1.5pt] at (-1.2,-0.02) {$\underline{z}$};
    \node[below] at (1.2,-0.05) {$\overline{z}$};
    \node[left] at (0,1.2) {$\overline{z}$};

    \draw[thick, blue!70!black]
      ($(L)!-0.08!(U)$) -- ($(U)!-0.12!(L)$)
      node[midway, above left, blue!70!black] {upper};

    \draw[dashed, red!45, line width=0.5pt] (-1.4,0) -- (1.4,0);
    \node[below right, red!70!black, inner sep=2pt] at (1.35,-0.02) {$\alpha{=}0$};
    \draw[dashed, orange!80!black!60, line width=0.5pt] (-0.5,-0.5) -- (1.6,1.6)
      node[right, orange!80!black] {$\alpha{=}1$};

    \draw[very thick, dashed, red!70!black] (-1.4,-0.49) -- (1.6,0.56)
      node[right, red!70!black] {lower};
    \draw[{Stealth[length=2.5pt]}-{Stealth[length=2.5pt]}, red!70!black, line width=0.6pt]
      (0:0.6) arc[start angle=0, end angle=45, radius=0.6];
    \node[below, red!70!black, inner sep=2pt] at (0.52,-0.02) {$\alpha{\in}[0,1]$};
  \end{tikzpicture}%
}

\newcommand{\drawPreluWedge}[1]{%
  \begin{tikzpicture}[
      scale=#1, transform shape,
      font=\scriptsize,
      >=Latex,
    ]
    \draw[->] (-1.6,0) -- (1.8,0) node[right] {$z$};
    \draw[->] (0,-0.75) -- (0,1.7) node[above] {$\hat{z}$};

    \fill[blue!15, opacity=0.45]
      (-1.2,-0.54) -- (1.2,0.54) -- (1.2,1.2) -- (-1.2,-0.3) -- cycle;

    \draw[thick, black] (-1.4,-0.35) -- (0,0) -- (1.4,1.4);

    \coordinate (L) at (-1.2,-0.3);
    \coordinate (U) at (1.2,1.2);
    \draw[dotted, black!55, line width=0.6pt] (L) -- (-1.2,0);
    \draw[dotted, black!55, line width=0.6pt] (U) -- (1.2,0);
    \draw[dotted, black!55, line width=0.6pt] (U) -- (0,1.2);
    \draw[dotted, black!55, line width=0.6pt] (-1.2,-0.3) -- (-1.2,-0.54);
    \fill (L) circle (1.2pt);
    \node[above left, inner sep=1.5pt] at (-1.25,-0.3) {$(\underline{z},a\underline{z})$};
    \fill (U) circle (1.2pt);
    \node[right] at (1.42,1.28) {$(\overline{z},\overline{z})$};
    \node[above right, inner sep=1.5pt] at (-1.2,0.0) {$\underline{z}$};
    \node[below] at (1.2,-0.05) {$\overline{z}$};
    \node[left] at (0,1.2) {$\overline{z}$};

    \draw[thick, blue!70!black]
      ($(L)!-0.08!(U)$) -- ($(U)!-0.12!(L)$)
      node[midway, above left, blue!70!black] {upper};

    \draw[dashed, red!45, line width=0.5pt] (-1.4,-0.35) -- (1.4,0.35)
      node[right, red!70!black] {slope $a$};
    \draw[dashed, orange!80!black!60, line width=0.5pt] (-0.5,-0.5) -- (1.6,1.6)
      node[right, orange!80!black] {slope $1$};

    \draw[very thick, dashed, red!70!black] (-1.4,-0.63) -- (1.6,0.72)
      node[right, red!70!black] {lower};
    \draw[{Stealth[length=2.5pt]}-{Stealth[length=2.5pt]}, red!70!black, line width=0.6pt]
      (14:0.75) arc[start angle=14, end angle=45, radius=0.75];
    \node[below, red!70!black, inner sep=1pt] at (0.55,-0.3) {slope $\in[a,1]$};
  \end{tikzpicture}%
}

\tikzset{
  intervalbox/.style={draw, dashed, rounded corners=2pt, minimum height=0.5cm, inner sep=2.5pt, align=center, font=\scriptsize},
  safebox/.style={box, fill=green!20, draw=green!55!black},
  unknownbox/.style={box, fill=gray!20, draw=gray!55!black},
  decision/.style={draw, diamond, aspect=2.2, fill=blue!15, font=\scriptsize, inner sep=1pt, align=center},
}
\newcommand{\drawVerificationFramework}[1]{%
  \begin{tikzpicture}[
      scale=#1, transform shape,
      font=\scriptsize,
    ]

    \node[intervalbox] (phiin) at (-5.5, 0) {$\phi_{in}$\\[-1pt] on $\hat s$};
    \node[decoderbox] (dec) at (-4.0, 0) {Decoder\\[-1pt] $\mathcal{D}$};
    \node[intervalbox] (xhat) at (-2.3, 0) {$[\underline{\hat x},\overline{\hat x}]$};
    \node[lbl] (x) at (-0.75, 0.9) {clean image\\[-1pt] $x$};
    \node[normbox] (mae) at (-0.75, 0) {$\mathrm{MAE}$\\[-1pt] via~\eqref{eq:mae-relu-identity}};
    \node[intervalbox] (bound) at (0.9, 0) {$[\mathrm{lb},\mathrm{ub}]$};
    \node[decision] (dec2) at (2.7, 0) {$\mathrm{ub}\!\le\!\tau$?};
    \node[safebox] (safe) at (4.5, 0.75) {safe};
    \node[unknownbox] (unknown) at (4.5, -0.75) {unknown};

    \draw[arr] (phiin) -- (dec);
    \draw[arr] (dec) -- (xhat);
    \draw[arr] (xhat) -- (mae);
    \draw[arr] (x) -- (mae);
    \draw[arr] (mae) -- (bound);
    \draw[arr] (bound) -- (dec2);
    \draw[arr] (dec2) -- (safe) node[midway, above left, lbl] {yes};
    \draw[arr] (dec2) -- (unknown) node[midway, below left, lbl] {no};

  \end{tikzpicture}%
}

\newcommand{\drawFadingArc}[1]{%
  \begin{tikzpicture}[
      scale=#1, transform shape,
      font=\scriptsize,
      >=Latex,
    ]
    \def\deltadeg{40}
    \pgfmathsetmacro{\cdelta}{cos(\deltadeg)}
    \pgfmathsetmacro{\sdelta}{sin(\deltadeg)}

    \draw[->] (-1.55,0) -- (1.7,0) node[right] {$\mathrm{Re}(h)$};
    \draw[->] (0,-1.35) -- (0,1.55) node[above] {$\mathrm{Im}(h)$};

    \draw[dashed, black!40, line width=0.6pt] (0,0) circle (1);

    \fill[blue!15, opacity=0.45]
      (\cdelta,-\sdelta) rectangle (1,\sdelta);
    \draw[thick, blue!70!black]
      (\cdelta,-\sdelta) rectangle (1,\sdelta);

    \draw[very thick, black]
      (-\deltadeg:1) arc (-\deltadeg:\deltadeg:1);

    \draw[dotted, black!55, line width=0.6pt] (0,0) -- (\deltadeg:1.15);
    \draw[dotted, black!55, line width=0.6pt] (0,0) -- (-\deltadeg:1.15);

    \draw[black!70, line width=0.5pt]
      (0.28,0) arc (0:\deltadeg:0.28);
    \node[black!70] at (18:0.42) {$\delta$};
    \draw[black!70, line width=0.5pt]
      (0.28,0) arc (0:-\deltadeg:0.28);
    \node[black!70] at (-18:0.42) {$\delta$};

    \fill (1,0) circle (1.3pt);
    \node[below right, inner sep=1.5pt] at (1.02,-0.02) {$h{=}1$};
    \fill (\deltadeg:1) circle (1.3pt);
    \node[above right, inner sep=1pt] at (\deltadeg:1.02) {$e^{j\delta}$};
    \fill (-\deltadeg:1) circle (1.3pt);
    \node[below right, inner sep=1pt] at (-\deltadeg:1.02) {$e^{-j\delta}$};

  \end{tikzpicture}%
}

\settoggle{usecomment}{true}

\begin{document}

\receiveddate{XX Month, XXXX}
\reviseddate{XX Month, XXXX}
\accepteddate{XX Month, XXXX}
\publisheddate{XX Month, XXXX}
\currentdate{XX Month, XXXX}
\doiinfo{TMLCN.2026.1234567}

\markboth{}{Author \MakeLowercase{\textit{et al.}}}

\title{Neural Network Verification for Deep Joint Source-Channel Coding}

\author{
  Thanh Le\authorrefmark{1}, Hai Duong\authorrefmark{2}, Takeshi Matsumura\authorrefmark{1}, and ThanhVu Nguyen\authorrefmark{2}
}
\affil{Wireless Systems Laboratory, National Institute of Information and Communication Technology, Yokosuka, Kanagawa, Japan}
\affil{George Mason University, Fairfax, VA, USA}
\corresp{Corresponding author: Thanh Le (email: hi@thanhle.xyz).}

\begin{abstract}
  Deep joint source-channel coding (\djscc{}) transmits data end-to-end over wireless channels using a neural encoder-decoder,
  but reconstruction quality can degrade sharply under adversarial perturbations and channel disturbances; no method formally bounds this degradation for \djscc{}.
  We present the first bound-propagation framework for verifying \djscc{}'s decoder, bounding worst-case reconstruction error over a given wireless channel's noise region.
  Current deep neural network (DNN) verifiers do not support three \djscc{} decoder components: parametric rectified linear activations (PReLU), transposed convolutions, and Rayleigh fading.
  We extend state-of-the-art techniques for optimization of linear relaxation in DNN verification for PReLU,
  replace the transposed convolution with its restricted upsample-then-convolution form,
  and formulate Rayleigh fading as a structural perturbation prepended directly into the decoder,
  thereby reducing the dimensionality of the verification problem.
  We also instantiate Lipschitz-regularized global robustness training, denoted \gloro{}, improving global robustness and enabling tight certification of \djscc{} models for the first time.
  On \djscc{} model for image transmission, this global robustness training procedure combined with structural encoding lowers the median certified bound by up to 41\% and
  certifies about ten times more safe cases (192 against 19) than \gloro{} with interval encoding at a 10-degree error in channel estimation.
  Over-the-air validation with an orthogonal frequency-division multiplexing (OFDM) implementation on software-defined radio devices confirm the certificate holds on real hardware,
  with a worst observed error on radio link at 0.082 against a certified bound of 0.128.
\end{abstract}

\begin{IEEEkeywords}
Bound propagation, deep joint source-channel coding, Lipschitz robustness, neural network verification, semantic communication, software-defined radio.
\end{IEEEkeywords}

\maketitle

\section{Introduction}
  \label{sec:intro}
  Neural networks can encode and decode wireless transmissions end-to-end,
  replacing hand-crafted source compression and channel codes with a single trained pipeline~\cite{o2017introduction,bourtsoulatze2019deep}.
  \djscc{}~\cite{bourtsoulatze2019deep} is the first image transmission model built this way:
  it trains an encoder-decoder pair jointly to map images to complex channel symbols,
  adapting to channel quality without a separate compression stage and without retraining for each new noise level.
  Transformer-based successors such as the Wireless Image Transmission Transformer (WITT)~\cite{yang2023witt} extend this approach to attention mechanisms and rate adaptation,
  and the broader semantic communication literature treats learned joint source-channel coding as a foundation for next-generation wireless systems~\cite{sagduyu2024will,getu2025semantic},
  where task-oriented, semantically-preserving transmission increasingly replaces strict bit-level fidelity~\cite{yang2022semantic,xie2021deep}.

  Such learned systems are being considered for mission-critical scenarios, where the decoder cannot be trusted blindly.
  Examples include bandwidth-constrained, safety-related image transmission for autonomous vehicles,
  industrial Internet of Things (IoT) sensing, and comparable applications~\cite{getu2025semantic,xie2021deep}.
  A corrupted reconstruction can propagate into a wrong downstream decision, \eg, an autonomous vehicle misjudging an obstacle from a degraded camera image, or an industrial controller acting on a misread sensor image~\cite{getu2025semantic,xie2020lite}.
  The worst-case behavior of a hand-crafted codec under channel noise is explicitly defined and can be systematically characterized.
  In contrast, the behavior of a DNN decoder is not explicitly bounded:
  the \djscc{} training objective does not constrain how much reconstruction quality may degrade for channel realizations outside the training set,
  but only optimizes average performance over observed ones.
  This gap presents a specific reliability concern, \eg,
  for a deployed \djscc{} receiver, it is important to determine the extent to which reconstruction quality may degrade under a specified channel disturbance.
  Evaluating \djscc{} models using finite-sample testing, which is the standard practice~\cite{bourtsoulatze2019deep,yang2023witt},
  can reveal individual failures
  but does not provide bounds on the worst-case performance across the full range of possible channel realizations.
  Similar to conventional software testing, this approach can demonstrate the presence of faults but cannot guarantee their absence.

  Neural network verification (NNV) can prove absence instead of merely searching for presence of these adversarial inputs.
  Bound-propagation verifiers~\cite{wang2018formal,zhang2018efficient,singh2019abstract,xu2020automatic} compute a certified interval on a network's own output,
  \eg, the logit range for a classifier, guaranteed to hold for every input in a specified region simultaneously, without enumerating individual inputs.
  State-of-the-art NNV tools now scale to networks with millions of parameters~\cite{katz2019marabou,duong2024harnessing,duong2025compositional}
  % and rank among the top performers at the annual neural-network verification competitions (VNN-COMPs)~\cite{kaulen2025vnncomp,brix2024fifth},
  with solvers such as \deeppoly{}~\cite{singh2019abstract}, \acrown{}~\cite{xu2020automatic,xu2021fast}, and interval bound propagation (\ibp{})~\cite{wang2018formal} covering
  various DNN architectures from image classifiers to aircraft collision-avoidance controllers~\cite{katz2017reluplex}.

  Recent research has extended formal certification methods to wireless systems.
  \citeauthor{kim2025certified}~\cite{kim2025certified} certified an antenna-selection network for massive multiple-input multiple-output (MIMO) systems,
  our previous works verified a mobile-traffic regression model~\cite{le2026traffic} and a power-control policy network~\cite{le2026power}.
  However, no formal verification framework currently exists for the \djscc{} decoder under physical wireless channel models.
  Despite these advancements, existing bound-propagation tools cannot verify the \djscc{} decoder in its current form.

  Three specific challenges hinder progress in this area.
  First, the \djscc{} decoder employs \prelu{}~\cite{he2015delving} activations throughout, allowing each layer to learn a small negative-side slope rather than fixing it at zero.
  Current verification tools support Rectified Linear Unit (\relu{}) activations but not \prelu{}. Replacing \prelu{} with \relu{} reduces reconstruction quality by up to two decibels (\autoref{sec:experiment:compatibility}).
  Second, the decoder uses transposed convolutions to upsample features, a function not natively supported by \acrown{}~\cite{xu2020automatic}. Naively replacing them alters the network.

  Last but not least, under flat Rayleigh fading assumption, each latent symbol acquires the same residual phase rotation remaining after amplitude equalization at the receiver, in addition to independent additive noise~\cite{tse2005fundamentals}.
  Standard interval encoding treats each latent coordinate as an independently.
  Consequently, interval's widths increase with phase uncertainty for each coordinate, despite a single shared rotation affecting all symbols simultaneously.
  Structured-perturbation verification~\cite{duong2026verifying,duong2026dual} demonstrates that bounding the generators of a perturbation, rather than its per-coordinate effects, produces significantly tighter certificates. This principle is adapted to the fading channel in the following section.

  In this paper, we introduce the first bound-propagation framework for the formal verification of \djscc{}'s decoder, addressing all three challenges while maintaining the reconstruction quality for which \djscc{} was originally trained.
  The primary contributions of this paper are:
  \begin{itemize}
    \item \textbf{\prelu{} relaxation.}
      We extend the optimization of linear relaxation \acrown{} to encompass the feasible relaxation slopes of \prelu{} (\autoref{sec:propose:prelu}).
      The standard \relu{} relaxation emerges as a special case when the learned negative-side slope is zero.
      Our derivation generalizes this scenario without loss and preserves the $1.78$ to $2.22$ dB of peak signal-to-noise ratio (PSNR) that would otherwise be lost by substituting \relu{}.
    \item \textbf{Upsample-then-convolution substitution.}
      We show that nearest-neighbor upsampling followed by convolution is a restricted transposed convolution (\autoref{prop:upsampleconv}),
      and the retrained decoder loses no more than $0.35$ dB of PSNR (\autoref{sec:experiment:compatibility}).
    \item \textbf{Structural Rayleigh input property.}
      We introduce the fading transform as a linear layer and directly bound its two real coordinates.
      This approach reduces the verifier's fading input from $2k$ per-symbol intervals
      to only $2$ degrees of freedom of a single complex block fading coefficient(\autoref{sec:propose:rayleigh}).
    \item \textbf{GloRo instantiation for \djscc{}.}
      We apply GloRo Lipschitz regularization~\cite{leino2021globally} to \djscc{}'s decoder,
      providing a closed-form radius under both additive white Gaussian noise (AWGN) and Rayleigh fading.
      This enables Lipschitz-aware fine-tuning from a pretrained checkpoint (\autoref{sec:propose:gloro}).
      This work aligns with efforts that train networks to facilitate verification~\cite{xu2024training},
  specialized to a channel-dependent radius.
  \end{itemize}

  Experiments on \djscc{} model~\cite{bourtsoulatze2019deep} show that \gloro{} training drops the \texttt{large} decoder's Lipschitz constant from $215.6$ to $25.5$,
  which makes it certifiable for the first time: the upper bound of mean absolute error (MAE) falls from above $1$ at every signal-to-noise ratio (SNR) to $0.08$.
  Structural encoding lowers the median certified bound by up to $41\%$ across model sizes, from $0.46$ to $0.27$ on \texttt{large}.
  At the widest phase uncertainty we test ($\delta{=}10^\circ$),
  the full pipeline certifies $192$ safe cases against $19$ for \gloro{} training with interval encoding alone,
  a $10\times$ gain.
  Over-the-air tests on two ADALM-PlutoSDR transceivers confirm the certificate on real hardware:
  across the ten CIFAR-10 classes the worst observed error is $0.082$, against a certified bound of $0.128$.

  % The remainder of this paper is organized as follows.
  % \autoref{sec:prelim} reviews \djscc{}, NNV, and bound propagation.
  % \autoref{sec:propose} presents our four contributions in detail.
  % \autoref{sec:experiment} evaluates them across model sizes, channel models, and real hardware.
  % \autoref{sec:related} surveys related verifiers and their applications.
  % \autoref{sec:conclusion} concludes.

\section{Preliminaries}
  \label{sec:prelim}

  \subsection{\djscc{} System Model}
    \label{sec:prelim:djscc}
    We consider image transmission over a bandwidth-limited wireless channel, as shown in \autoref{fig:system-model}.
    The transmitter sends an input image $x \in \mathbb{R}^n$ to the receiver by mapping $x$ to a complex-valued channel input symbol $s \in \mathbb{C}^k$.
    This mapping is the encoder $\mathcal{E}: \mathbb{R}^n \rightarrow \mathbb{C}^k$ in \autoref{fig:system-model}, following the \djscc{} design~\cite{bourtsoulatze2019deep}.

    The input size $n$ is typically much larger than the compressed representation size $k$; following the literature on \djscc{}, we call the ratio $k/n$ the \emph{bandwidth compression ratio}.
    The encoder normalizes $s$ to satisfy an average power constraint,
    which the communication protocol imposes to avoid interference with other transmissions:
    \begin{equation}
      \frac{1}{k} \mathbb{E} [s * s] \le P
    \end{equation}
    The encoder itself is a sequence of neural network layers:
    convolutional neural network (CNN) layers,
    \prelu{} activations,
    and a final normalization layer that enforces the power constraint.
    CNN layers extract image features and reduce them to a compact representation, the complex channel input, while \prelu{} activations let the encoder learn a non-linear mapping from the source image to the coded wireless signal space.
    \begin{figure}[t]
      \centering
      \drawSystemModel{0.95}
      \caption{The \djscc{} system model to be analyzed.}
      \label{fig:system-model}
    \end{figure}

    The wireless channel in \autoref{fig:system-model} carries $s$ to the receiver, corrupting the symbols with random noise.
    The receiver maps the corrupted channel output $\hat{s} \in \mathbb{C}^k$ back to an approximate reconstruction $\hat{x} \in \mathbb{R}^n$ of the original image;
    this mapping is the decoder $\mathcal{D}: \mathbb{C}^k \rightarrow \mathbb{R}^n$ in \autoref{fig:system-model}.
    The decoder mirrors the encoder.
    A stack of transposed convolutional layers with \prelu{} activations inverts the encoding.
    A final sigmoid layer then maps the output to $[0,1]$ before rescaling it to the pixel range.
    To recover a clean image, the decoder must also denoise the artifacts introduced by the channel.
    Together, the encoder, channel, and decoder form a joint source-channel coding model: the encoder and decoder are trained jointly, while the channel is a differentiable, non-trainable layer that connects them during training.
    Under AWGN, it applies $\hat{s} = s + n$ with $n \sim \mathcal{CN}(0, \sigma^2 I)$.

    Under slow Rayleigh fading, the channel instead applies $\hat{s} = hs + n$, where $h$ is the residual the receiver's channel estimator leaves rather than the raw fading coefficient.
    Equalizing by an estimate $\hat h$ that is accurate in amplitude leaves the unit-modulus rotation $h = e^{j\theta}$ with $|\theta| \le \delta$, where $\delta$ is the phase accuracy the estimator guarantees~\cite{tse2005fundamentals}.
    We assume amplitude equalization is exact, so $|h|=1$ and only the residual phase is unknown.

    \djscc{} evaluates quality by PSNR between $x$ and $\hat x$, computed from the mean squared error (MSE):
    \begin{align}
      \mathrm{PSNR}(x,\hat x) &= 10\log_{10}\!\left(\frac{x_{\max}^2}{\mathrm{MSE}(x,\hat x)}\right), \label{eq:psnr} \\
      \mathrm{MSE}(x,\hat x) &= \frac{1}{n}\sum_{i=1}^n (x_i-\hat x_i)^2, \label{eq:mse}
    \end{align}
    where $x_{\max}$ is the maximum possible pixel value.

  \subsection{Neural Network Verification}
    \label{sec:prelim:nnv}
    The channel corrupts the latent symbols before decoding, resulting in a lower-quality reconstruction of the original message (\eg, an image).
    Neural network verification can quantify how much this quality can degrade under a given noise range, providing a useful certificate,
    \eg, a proof that a minimum quality is always achieved.

    Formally, consider a DNN $N$ and a property $\phi$ of the form $\phi_{in} \rightarrow \phi_{out}$ over the inputs and outputs of $N$.
    The \emph{DNN verification problem} asks whether $\phi$ is valid for $N$.
    Bound-propagation methods~\cite{wang2018formal,singh2019abstract,xu2020automatic} solve this problem by over-approximating each layer.
    These methods are \emph{incomplete}:
    if the over-approximation still satisfies $\phi$, the property is proven;
    otherwise the verifier returns unknown,
    even when the true network may satisfy $\phi$.
    In contrast, \emph{complete} verifiers based on satisfiability modulo theories (SMT) solving~\cite{katz2017reluplex}, mixed-integer programming~\cite{tjeng2019evaluating}, branch-and-bound~\cite{wang2021beta}, or DPLL(t)~\cite{duong2025neuralsat} decide $\phi$ exactly: given enough time, they either prove the property or produce a genuine counterexample.
    Complete verifiers often use fast incomplete bound propagation as a subroutine to prune branches~\cite{xu2021fast, wang2021beta, duong2025neuralsat}.

  \subsection{Bound Propagation}
    \label{sec:prelim:bp}
    Bound propagation computes a sound interval $[\underline{y}, \overline{y}]$ on the output objective $y$ of $N$ over an input region $\phi_{in}$.
    In this paper, $\phi_{in}$ is an interval on the latent channel symbols.

    The main bound-propagation methods form a precision-cost trade-off.
    \ibp{}~\cite{wang2018formal} is fastest and loosest.
    \deeppoly{}~\cite{singh2019abstract,zhang2018efficient} back-substitutes linear relaxations in one pass.
    \acrown{}~\cite{xu2020automatic} further tunes the relaxation slopes by gradient descent, which is slower but yields the tightest bounds among incomplete verifiers.

    \subsubsection{Interval bound propagation}
      \ibp{}~\cite{wang2018formal} maintains only a concrete interval at each layer,
      starting from the input interval $\phi_{in} = [\underline{x}, \overline{x}]$.
      Consider an affine layer $z = Wx + b$ with incoming bounds $x \in [\underline{x}, \overline{x}]$.
      The pre-activation bounds are
      \begin{equation}
        \label{eq:ibp-linear}
        \begin{aligned}
          \underline{z} &= W^{+} \underline{x} + W^{-} \overline{x} + b, \\
          \overline{z} &= W^{+} \overline{x} + W^{-} \underline{x} + b,
        \end{aligned}
      \end{equation}
      where $W^{+} = \max(W, 0)$ and $W^{-} = \min(W, 0)$ are taken elementwise.
      The sign split is required: a negative weight reverses which endpoint of $[\underline{x}, \overline{x}]$ yields the minimum or maximum of that term.

      For $\hat{z} = \mathrm{ReLU}(z)$ with $z \in [\underline{z}, \overline{z}]$, monotonicity gives
      \begin{equation}
        \label{eq:ibp-relu}
        \underline{\hat{z}} = \mathrm{ReLU}(\underline{z}), \qquad \overline{\hat{z}} = \mathrm{ReLU}(\overline{z}).
      \end{equation}
      \ibp{} applies~\eqref{eq:ibp-linear}--\eqref{eq:ibp-relu} layer by layer to the output.
      Each step is fast but discards correlations among layer outputs.

      \autoref{fig:ibp-box} visualizes~\eqref{eq:ibp-relu} at one unstable \relu{}:
      the interval keeps only the endpoints $\underline{\hat{z}} = \mathrm{ReLU}(\underline{z}) = 0$ and $\overline{\hat{z}} = \mathrm{ReLU}(\overline{z}) = \overline{z}$,
      so the admissible $(z,\hat{z})$ pairs form the entire interval region
      $[\underline{z},\overline{z}] \times [\underline{\hat{z}},\overline{\hat{z}}]$ (shaded),
      which is the over-approximation area and discards the relation between $z$ and $\hat{z}$.
      The bounds therefore widen and often become vacuous on deep networks.

      \begin{figure}[t]
        \centering
        \drawIbpBox{1.15}
        \caption{Interval bound propagation through an unstable \relu{} over $[\underline{z},\overline{z}]$.}
        \label{fig:ibp-box}
      \end{figure}

    \subsubsection{Linear relaxation}
      Linear relaxation is tighter bound propagation methods by bounding each layer's pre-activation $z^{(l)}$ by affine functions of the network inputs $x$:
      \begin{equation}
        \underline{A}^{(l)} x + \underline{b}^{(l)} \;\le\; z^{(l)} \;\le\; \overline{A}^{(l)} x + \overline{b}^{(l)}.
        \label{eq:linear-bound}
      \end{equation}
      Affine and convolutional layers compose these bounds exactly.
      A \relu{} layer requires a local over-approximation.

      Consider a \relu{} layer with pre-activation $z \in [\underline{z}, \overline{z}]$,
      where $\underline{z} < 0 < \overline{z}$, and output $\hat{z} = \mathrm{ReLU}(z)$.
      The layer is \emph{stable} if $\underline{z} \ge 0$ (always active) or $\overline{z} \le 0$ (always inactive):
      then the map is exact and needs no relaxation.
      An unstable \relu{} is over-approximated by two linear functions (\autoref{fig:relu-wedge}).
      The tightest upper bound is the line through $(\underline{z}, 0)$ and $(\overline{z}, \overline{z})$:
      \begin{equation}
        \hat{z} \;\le\; \bar{d} z + \bar{b},
        \qquad
        \bar{d} = \tfrac{\overline{z}}{\overline{z}-\underline{z}},
        \qquad
        \bar{b} = -\tfrac{\underline{z} \overline{z}}{\overline{z}-\underline{z}}.
        \label{eq:relu-upper}
      \end{equation}
      The lower bound is any line $\hat{z} \ge \alpha z$ with $\alpha \in [0,1]$.

      In \autoref{fig:relu-wedge}, the solid blue line is the upper bound~\eqref{eq:relu-upper} and the thick dashed red line is the lower bound for one such $\alpha$;
      the arrow shows that $\alpha$ can rotate this line anywhere between the thin dashed lines $\alpha=0$ and $\alpha=1$.
      The shaded region is the over-approximation area for the drawn $\alpha$, and its shape depends on $\alpha$.
      \deeppoly{} picks an endpoint $\alpha = 0$ or $\alpha = 1$ to minimize the over-approximated area (shaded in blue).

      \begin{figure}[t]
        \centering
        \drawReluWedge{1.15}
        \caption{Linear relaxation of an unstable \relu{} over $[\underline{z},\overline{z}]$ with lower slope $\alpha\in[0,1]$.}
        \label{fig:relu-wedge}
      \end{figure}

    \subsubsection{Back-substitution and concretization}
      Back-substitution is the procedure \deeppoly{} uses to eliminate intermediate layers: it repeatedly replaces each intermediate layer's variable by its lower or upper affine bound~\eqref{eq:linear-bound} until the resulting bound depends only on $x$.
      \emph{Concretization} then evaluates the linear forms over $\mathcal{B}$:
      \begin{equation}
        \label{eq:concretization}
        \begin{aligned}
          \underline{z}^{(l)} &= \min_{x \in \mathcal{B}}\;(\underline{A}^{(l)} x + \underline{b}^{(l)}), \\
          \overline{z}^{(l)} &= \max_{x \in \mathcal{B}}\;(\overline{A}^{(l)} x + \overline{b}^{(l)}).
        \end{aligned}
      \end{equation}
      For an $\ell_p$ ball $\mathcal{B} = \{x : \|x - x_0\|_p \le \varepsilon\}$, H\"{o}lder's inequality~\cite{wang2021beta} gives the closed form
      \begin{equation}
        \label{eq:holder-concretize}
        \begin{aligned}
          \overline{z}^{(l)} &= \overline{A}^{(l)} x_0 + \varepsilon \|\overline{A}^{(l)}\|_q + \overline{b}^{(l)}, \\
          \underline{z}^{(l)} &= \underline{A}^{(l)} x_0 - \varepsilon \|\underline{A}^{(l)}\|_q + \underline{b}^{(l)},
        \end{aligned}
      \end{equation}
      where $\tfrac{1}{p} + \tfrac{1}{q} = 1$.
      The resulting bound $[\underline{z}^{(l)}, \overline{z}^{(l)}]$ decide the neuron stability and the linear relaxation in the next layers.

    \subsubsection{Optimization of linear bound}
      \acrown{} treats the lower slopes $\alpha$ as trainable parameters in $[0,1]$.
      It optimizes them by gradient descent on the certified output bound.
      \autoref{alg:acrown} summarizes the procedure for an $L$-layer network that alternates an affine map and a \relu{} at each layer $l=1,\ldots,L$.
      \deeppoly{} is the special case with $T{=}0$ and a heuristic $\alpha\in\{0,1\}$.

      \begin{algorithm}[t]
        \small
        \caption{\acrown{} bound propagation}
        \label{alg:acrown}
        \Input{
          $L$-layer network with affine maps $\{W^{(l)},b^{(l)}\}_{l=1}^{L}$ and \relu{} layers;
          input ball $\mathcal{B}=\{x:\|x-x_0\|_p\le\varepsilon\}$;
          iterations $T$;
          step size $\eta$
        }
        \Output{output bounds $[\underline{z}^{(L)}, \overline{z}^{(L)}]$}
        Compute layer intervals $[\underline{z}^{(l)},\overline{z}^{(l)}]$ for $l=1,\ldots,L$ by \ibp{}~\eqref{eq:ibp-relu}\;
        % \tcp*{via~\eqref{eq:ibp-linear}--}
        Initialize slopes $\alpha^{(l)}\in[0,1]$ for each unstable \relu{}\;
        \For{$t\leftarrow 1$ \KwTo $T$}{
          \For{$l\leftarrow 1$ \KwTo $L$\label{line:acrown:outer}}{
            $(\underline{A},\underline{b}),(\overline{A},\overline{b})\leftarrow(I,0),(I,0)$\;
            \label{line:acrown:init}
            \For{$k\leftarrow l$ \KwTo $1$ by $-1$\label{line:acrown:back}}{
              $(\overline{d}^{(k)},\overline{b}^{(k)},\alpha^{(k)})\leftarrow\mathrm{Relax}\bigl(\underline{z}^{(k)},\overline{z}^{(k)},\alpha^{(k)}\bigr)$\;
              \label{line:acrown:relax}
              $(\underline{A},\underline{b}),(\overline{A},\overline{b})\leftarrow\mathrm{Combine}\bigl(\overline{d}^{(k)},\overline{b}^{(k)},\alpha^{(k)},(\underline{A},\underline{b}),(\overline{A},\overline{b})\bigr)$\;
              \label{line:acrown:subst}
              $(\underline{A},\underline{b})\leftarrow(\underline{A}\,W^{(k)}, \underline{A}\,b^{(k)}+\underline{b})$\;
              $(\overline{A},\overline{b})\leftarrow(\overline{A}\,W^{(k)}, \overline{A}\,b^{(k)}+\overline{b})$\;
              % \tcp*{affine map of layer $k$}
            }
            $\underline{z}^{(l)},\overline{z}^{(l)}\leftarrow\mathrm{Concretize}\bigl((\underline{A},\underline{b}),(\overline{A},\overline{b})\bigr)$~\eqref{eq:holder-concretize}\;
            \label{line:acrown:conc}
          }
          $\alpha\leftarrow\mathrm{Project}_{[0,1]}\bigl(\alpha+\eta \nabla_\alpha \underline{z}^{(L)}\bigr)$\;
          \label{line:acrown:opt}
          % \tcp*{ascent on the output lower bound; use $-\nabla_\alpha \overline{z}^{(L)}$ to tighten the upper bound}
        }
        \Return{$[\underline{z}^{(L)},\overline{z}^{(L)}]$}\;
      \end{algorithm}

      The algorithm first obtains one interval per layer with \ibp{}.
      Each iteration visits layers $l=1,\ldots,L$ (\autoref{line:acrown:outer}).
      For layer $l$, it back-substitutes from layer $k=l$ down to $1$ (\autoref{line:acrown:back}--\autoref{line:acrown:subst}).
      At each $k$, $\mathrm{Relax}$ turns the \relu{} bounds $[\underline{z}^{(k)},\overline{z}^{(k)}]$ and slope $\alpha^{(k)}$ into the upper line $(\overline{d}^{(k)},\overline{b}^{(k)})$ of \autoref{fig:relu-wedge}, and $\mathrm{Combine}$ folds $(\overline{d}^{(k)},\overline{b}^{(k)},\alpha^{(k)})$ into the running coefficients $(\underline{A},\underline{b}),(\overline{A},\overline{b})$ by the sign of each entry, before the affine map $(W^{(k)},b^{(k)})$ is applied.
      $\mathrm{Concretize}$~(\autoref{line:acrown:conc}) then evaluates the resulting linear form over $\mathcal{B}$ to yield the layer interval via~\eqref{eq:holder-concretize}.
      A projected gradient step~(\autoref{line:acrown:opt}) updates the slopes $\alpha$ to tighten the output bound.
      Optimizing $\alpha$ improves on fixed \deeppoly{} at the cost of $T$ full backward passes.

  \subsection{Lipschitz-Certified Training}
    \label{sec:prelim:gloro}
    Bound propagation formally evaluates a trained model: it certifies how much the output can change under a given input perturbation.
    It does not change the model weights, so it cannot by itself improve robustness.
    We therefore complement verification with \gloro{}-style training~\cite{leino2021globally}, which builds robustness into the network during learning.

    \begin{definition}[Lipschitz constant]
      \label{def:lipschitz}
      A function $N$ is Lipschitz continuous with constant $K$ (w.r.t.\ a chosen norm) if
      \begin{equation}
        \label{eq:lipschitz}
        \|N(x)-N(x')\| \;\le\; K \|x-x'\|
        \qquad\text{for all inputs $x,x'$.}
      \end{equation}
      The smallest such $K$ is the Lipschitz constant of $N$.
    \end{definition}
    Intuitively, $K$ is a global sensitivity: it upper-bounds how fast the output can change when the input moves.
    If $\|x-x'\|\le\varepsilon$, then~\eqref{eq:lipschitz} gives $\|N(x)-N(x')\|\le K\varepsilon$ \emph{everywhere}, not only near one test point.
    A small $K$ therefore quantifies \emph{global} robustness of the DNN: no pair of inputs within distance $\varepsilon$ can produce an output gap larger than $K\varepsilon$.

    \begin{definition}[Lipschitz constant of a DNN]
      \label{def:gloro-K}
      For an $L$-layer network, \gloro{} upper-bounds $K$ by the product of per-layer Lipschitz constants:
      \begin{equation}
        \label{eq:gloro-K}
        K \;\le\; \prod_{l=1}^{L} K^{(l)},
      \end{equation}
      where $K^{(l)}=\|W^{(l)}\|_2$ (the spectral norm) for an affine or convolutional layer, and $K^{(l)}$ is the activation's Lipschitz constant for a \relu{} or \prelu{} layer: $1$ for \relu{}, and $\max(1,|a|)$ for \prelu{} with negative slope $a$.
    \end{definition}

    Spectral norms are estimated during training by power iteration, which approximates $\|W^{(l)}\|_2$ by repeatedly multiplying a random unit vector by $W^{(l)\top}W^{(l)}$ until convergence.
    \gloro{} regularizes $K$ directly during training: it adds a penalty term on $K$ to the loss function,
    such that gradient descent shrinks the DNN's global sensitivity.
    \begin{equation}
      \label{eq:gloro-loss}
      \mathcal{L}_{\mathrm{GloRo}}(\theta) \;=\; \mathcal{L}(\theta) \;+\; \lambda\, g(K),
    \end{equation}
    for loss function $\mathcal{L}$, a regularization weight $\lambda$, and a monotonically increasing penalty $g$ on $K$.

\section{Problem Formulation}
  \label{sec:propose:problem}
  This section formulates \djscc{} verification as the DNN verification problem
  with an input property on the latent symbols and
  an output property on the reconstruction quality.

  \subsection{Input Property}
    \begin{definition}[Input property]
      \label{def:input-property}
      Let $s\in\mathbb{C}^k$ be the transmitted latent symbols
      and let $\hat{s}\in\mathbb{C}^k$ be the symbols that the decoder receives.
      Let $\|z\|_\infty := \max(|\mathrm{Re}(z)|,|\mathrm{Im}(z)|)$.
      The input property $\phi_{in}$ is an interval around each symbol of $\hat{s}$:
      \begin{equation}
        \label{eq:def:input-property}
        \phi_{in} \;=\; \bigl\{\hat s : \|\hat s_i - c_i\|_\infty \le r_i,\ i=1,\ldots,k\bigr\},
      \end{equation}
      for a center $c\in\mathbb{C}^k$ and radius $r\in\mathbb{R}_{\ge0}^k$ that depend on channel assumptions.
    \end{definition}

    \noindent\textbf{AWGN channel.}
    Under AWGN, $\hat s = s+n$ with $n\sim\mathcal{CN}(0,\sigma^2 I)$.
    The AWGN standard deviation on each axis of $n$ depends on the channel SNR $\gamma$ (in dB):
    \begin{equation}
      \label{eq:snr-sigma}
      \sigma(\gamma) = \sqrt{\tfrac{1}{2\cdot 10^{\gamma/10}}}.
    \end{equation}
    Let $\kappa$ be the number of standard deviations the interval extends on each axis, \eg, $\kappa=3$ covers about $99.7\%$ of the AWGN noise mass per axis, not jointly over all $2k$ axes.
    Finally, the parameter for input property are:
    \begin{equation}
      \label{eq:awgn-property}
      c = s, \quad r = \kappa\sigma(\gamma).
    \end{equation}

    \noindent\textbf{Rayleigh channel.}
    We assume slow fading, the regime where $h$ stays effectively constant over one packet's transmission.
    Under this assumption, $\hat{s} = h s + n$ where $h$ denotes the residual fading after imperfect channel estimation and equalization, \eg, $h=e^{j\theta}$, $|\theta|\le\delta$.
    The input domain for $h$ is the arc $\phi_h^\delta=\{e^{j\theta}:|\theta|\le\delta\}$.
    The reachable set $\{hs_i\}$ is the arc of radius $|s_i|=\sqrt{\mathrm{Re}(s_i)^2+\mathrm{Im}(s_i)^2}$ that $s_i$ sweeps as it rotates through $\pm\delta$.

    DNN verifiers take intervals as inputs, so we bound the reachable set $\{hs_i\}$ by an interval around $s_i$; \autoref{prop:rayleigh-sound} below shows this bound is sound, i.e., it never excludes a value $hs_i$ that some $h\in\phi_h^\delta$ can actually produce.

    \begin{proposition}[Residual fading interval is sound]
      \label{prop:rayleigh-sound}
      For $h=e^{j\theta}$ with $|\theta|\le\delta$ and $\delta\in[0,\pi]$,
      \[
        \|hs_i-s_i\|_\infty \;\le\; 2\sin\bigl(\tfrac{\delta}{2}\bigr)|s_i|.
      \]
    \end{proposition}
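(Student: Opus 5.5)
The plan is to bound the complex modulus $|hs_i-s_i|$ and then pass to the paper's $\|\cdot\|_\infty$ on $\mathbb{C}$, which is the max of the absolute real and imaginary parts.

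\textbf{Step 1: reduce to a modulus bound.} For any $z\in\mathbb{C}$,
\[
|\mathrm{Re}(z)|\le|z| \quad\text{and}\quad |\mathrm{Im}(z)|\le|z|,
\]
so $\|z\|_\infty\le|z|$. It therefore suffices to show $|hs_i-s_i|\le 2\sin(\delta/2)|s_i|$.

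\textbf{Step 2: factor.} Write $hs_i-s_i=(e^{j\theta}-1)s_i$, so that $|hs_i-s_i|=|e^{j\theta}-1|\,|s_i|$ by multiplicativity of the modulus. Factoring out the half-angle gives
\[
e^{j\theta}-1=e^{j\theta/2}\bigl(e^{j\theta/2}-e^{-j\theta/2}\bigr)=e^{j\theta/2}\cdot 2j\sin(\theta/2).
\]
Hence $|e^{j\theta}-1|=2|\sin(\theta/2)|$. This is the chord length between $1$ and $e^{j\theta}$ on the unit circle.

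\textbf{Step 3: monotonicity.} Since $|\theta|\le\delta\le\pi$, we have $|\theta|/2\in[0,\pi/2]$. The sine is nonnegative and increasing on this interval, so
\[
|\sin(\theta/2)|=\sin(|\theta|/2)\le\sin(\delta/2).
\]
Chaining Steps 1--3 yields the claim for every $h\in\phi_h^\delta$.

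\textbf{Main obstacle.} There is no real obstacle. The only point needing care is the monotonicity step, which is exactly where the hypothesis $\delta\le\pi$ enters: beyond $\pi$ the chord length is no longer increasing in $\delta$.

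The bound via $\|z\|_\infty\le|z|$ is loose by up to a factor of $\sqrt2$. That slack is irrelevant for soundness, which is all the proposition asserts.
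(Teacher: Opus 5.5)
Your proof is correct and follows the paper's argument: factor out $|h-1|$, show $|h-1|=2|\sin(\theta/2)|$, use monotonicity of $\sin$ on $[0,\pi/2]$ (which is where $\delta\le\pi$ enters), and finish with $\|z\|_\infty\le|z|$. The only cosmetic difference is how you get the chord length: you use the half-angle factorization $e^{j\theta}-1=2j\,e^{j\theta/2}\sin(\theta/2)$, whereas the paper expands $(\cos\theta-1)^2+\sin^2\theta=4\sin^2(\theta/2)$.
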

    \begin{proof}
      $|h-1|^2=(\cos\theta-1)^2+\sin^2\theta=2-2\cos\theta=4\sin^2(\tfrac{\theta}{2})$, so $|h-1|=2|\sin(\tfrac{\theta}{2})|$.
      Since $|\theta|\le\delta\le\pi$ and $\sin$ is increasing on $[0,\tfrac{\pi}{2}]$, $|h-1|\le2\sin(\tfrac{\delta}{2})$.
      Hence $|hs_i-s_i|=|h-1|\,|s_i|\le2\sin(\tfrac{\delta}{2})|s_i|$, and $\|hs_i-s_i\|_\infty\le|hs_i-s_i|$ since $\|z\|_\infty\le|z|$ for every $z\in\mathbb{C}$. \qedhere
    \end{proof}

    Centering the interval at $s_i$ and adding the $\kappa\sigma(\gamma)$ noise interval for $n_i$ gives the Rayleigh input property:
    \begin{equation}
      \label{eq:rayleigh-property}
      c_i = s_i, \quad r_i = 2\sin\bigl(\tfrac{\delta}{2}\bigr)|s_i| + \kappa\sigma(\gamma).
    \end{equation}
    %  This interval has the same width on both axes.
    % A tighter interval would use a different width on each axis, since a short arc is much wider along the tangent than along the radius; we keep the equal-width form here, and \autoref{sec:propose:rayleigh} improves on it.
    % Once $\kappa\delta\ge1$ (\eg, $\delta\gtrsim19.1^\circ$ at $\kappa=3$), the linear fading term alone already exceeds $|s_i|$, so we switch to the origin-centered interval $c_i=0$, $r_i=|s_i|+\kappa\sigma(\gamma)$ instead.
    % At $\delta=\pi$ this interval is exact for the full circle, so non-coherent reception is still covered correctly, if conservatively, over the whole $\kappa\delta\ge1$ range.
    Unlike the AWGN, the radius varies per transmitted symbol, and the fading term $2\sin(\tfrac{\delta}{2})|s_i|$ does not shrink with SNR.
    % Bounding each symbol independently also discards the fact that all $k$ symbols share one residual $h$ and \autoref{sec:propose:rayleigh} addresses this challenge.

  \subsection{Output Property}
    An output property need to that reflects reconstruction quality, yet bound-propagation verifiers only certify a scalar bound on a linear transformation of the network's output.
    \djscc{}'s natural quality metric, PSNR~\autoref{eq:psnr} and MSE~\autoref{eq:mse}, does not fit that requirement as they are nonlinear.
    We therefore bound the network's output by $\mathrm{MAE}$.

    \begin{definition}[Output property]
      \label{def:output-property}
      For a threshold $\tau>0$ and clean image $x\in\mathbb{R}^n$, the output property $\phi_{out}$ holds for a reconstruction $\hat x$ if
      \begin{equation}
        \label{eq:def:output-property}
        \mathrm{MAE}(x,\hat x) \;=\; \frac{1}{n}\sum_{i=1}^{n} |x_i-\hat x_i| \;\le\; \tau.
      \end{equation}
    \end{definition}

    For a scalar residual $r=x_i-\hat x_i$, the exact identity hold:
    \begin{equation}
      \label{eq:mae-relu-identity}
      |r| \;=\; \mathrm{ReLU}(r) + \mathrm{ReLU}(-r)
    \end{equation}

    Particularly, if $r\ge0$, $\mathrm{ReLU}(r)+\mathrm{ReLU}(-r)=r+0=|r|$; if $r<0$, $\mathrm{ReLU}(r)+\mathrm{ReLU}(-r)=0+(-r)=|r|$.
    $\mathrm{MAE}$ is therefore exactly representable with the \relu{} primitives our bound propagation already supports~\cite{duong2026verifying}.
    The property is \emph{safe} if no input in $\phi_{in}$ violates $\phi_{out}$.
    Bound propagation certifies an interval $[\underline{y},\overline{y}]$ on the output objective $y=\mathrm{MAE}(x,\hat x)$ over every $\hat{s} \in \phi_{in}$.
    If the over-approximation satisfies $\phi_{out}$, \eg, $\overline{y}\le\tau$, the result is safe.

  \subsection{Verification Framework}
    \begin{figure}[t]
      \centering
      \resizebox{\linewidth}{!}{\drawVerificationFramework{1.0}}
      \caption{The verification framework for \djscc{}.}
      \label{fig:verification-framework}
    \end{figure}

    \autoref{fig:verification-framework} shows the verification framework for the receiver of \djscc{}.
    Bound propagation carries the input property $\phi_{in}$ through the decoder $\mathcal{D}$ to a certified interval on the reconstruction denotes $[\underline{\hat x},\overline{\hat x}]$.
    It then compares that interval against the clean image $x$ via the two-\relu{} $\mathrm{MAE}$ identity~\eqref{eq:mae-relu-identity}.
    The result is a certified interval $[\underline{y},\overline{y}]$ on $y=\mathrm{MAE}(x,\hat x)$.
    The final check against the output property $\phi_{out}$ (\autoref{def:output-property}) follows \autoref{sec:prelim:nnv}: $\overline{y}\le\tau$ returns safe, otherwise unknown.

\section{Propose Approach}
  \label{sec:propose}
  We make \djscc{} verifiable and certifiably robust with four contributions:
  a sound \prelu{} relaxation that extends \acrown{} to \djscc{}'s activations (\autoref{sec:propose:prelu}),
  a transposed-convolution replacement that \acrown{} supports natively (\autoref{sec:propose:convtranspose}),
  a structural Rayleigh input property that exposes $h$ to bound propagation (\autoref{sec:propose:rayleigh}),
  and a \gloro{} instantiation with a closed-form radius for both channels (\autoref{sec:propose:gloro}).

  \subsection{\prelu{} Relaxation with an Auto-Tuned Slope}
    \label{sec:propose:prelu}
    \djscc{} uses \prelu{} activations for every layer except the final sigmoid output layer,
    to learn a non-linear mapping from the source signal space to the coded signal space~\cite{bourtsoulatze2019deep}.
    \prelu{} generalizes \relu{} by learning the negative-side slope $a$ instead of fixing it at $0$.

    \begin{definition}[Parametric \relu{}]
      \label{def:prelu}
      For a pre-activation $z \in \mathbb{R}$ and a layer's learned negative-side slope $a$,
      \begin{equation}
        \label{eq:def:prelu}
        \mathrm{PReLU}(z) =
        \begin{cases}
          z, & z \ge 0, \\
          a z, & z < 0.
        \end{cases}
      \end{equation}
    \end{definition}
    Every \prelu{} layer in the encoder and decoder shares one such $a$.
    This single extra parameter per layer improves accuracy over \relu{}
        at negligible computational cost~\cite{he2015delving}.

    \deeppoly{}'s \relu{} relaxation (\autoref{sec:prelim:bp}) assumes a lower-bound feasible interval of $[0,1]$:
    any slope $\alpha\in[0,1]$, including the endpoint $\alpha=0$,
    gives a sound lower bound $\hat z\ge\alpha z$ on \relu{}'s output.
    Once the layer's learned slope $a>0$,
    reusing \relu{}'s linear relaxation is \emph{unsound} because
    if $z<0$ the true \prelu{} output is $\hat z = az < 0$.
    This, \relu{}'s relaxation cannot simply be reused as-is.

    \begin{figure}[t]
      \centering
      \drawPreluWedge{1.15}
      \caption{Linear relaxation of an unstable \prelu{} over $[\underline z,\overline z]$ with learned slope $a$.}
      \label{fig:prelu-wedge}
    \end{figure}

    \prelu{} is piecewise-linear but non-linear at the origin,
    so we over-approximate an unstable \prelu{} neuron with two linear functions
    over the pre-activation interval $z \in [\underline z,\overline z]$.
    This keeps back-substitution exact linear algebra.
    \autoref{fig:prelu-wedge} shows the resulting wedge, the \prelu{} counterpart of \autoref{fig:relu-wedge}.
    The upper bound is the line through $(\underline z, a\underline z)$ and $(\overline z, \overline z)$.
    The lower bound is again a line through the origin but its feasible slopes stay within $[a, 1]$.
    In \autoref{fig:prelu-wedge}, the solid blue line is the upper bound~\eqref{eq:prelu-bounds},
    the thick dashed red line is the lower bound for one slope in the feasible interval $[a,1]$,
    the arrow shows that this slope can rotate anywhere between the thin dashed lines of slope $a$ and slope $1$,
    and the shaded region is the over-approximation area for the drawn slope.

    \begin{definition}[\prelu{} relaxation]
      \label{def:prelu-relax}
      Let $z \in [\underline z,\overline z]$ and $\hat z = \mathrm{PReLU}(z)$. If $\underline z\ge0$ or $\overline z\le0$, the
      neuron is \emph{stable} and $\hat z=z$ or $\hat z=az$ exactly, with no
      relaxation. Otherwise ($\underline z<0<\overline z$), for every $\alpha\in[0,1]$,
      \begin{equation}
        \label{eq:prelu-bounds}
        \begin{aligned}
          &\bigl(a+\alpha(1-a)\bigr) z \;\le\; \hat z \;\le\; \bar d  z + \bar b, \\
          &\bar d = \tfrac{\overline z-a\underline z}{\overline z-\underline z}, \qquad \bar b = a\underline z - \bar d \underline z.
        \end{aligned}
      \end{equation}
    \end{definition}

    Setting $a=0$ in \autoref{eq:prelu-bounds} gives exactly \relu{}'s line
    $\hat z\le\bar d z+\bar b$~\eqref{eq:relu-upper} and lower family $\hat{z} \ge \alpha z$,
    $\alpha\in[0,1]$: a nonzero $a$ only rescales that same $\alpha$ onto $[a, 1]$
    instead of $[0,1]$.
    So \autoref{eq:prelu-bounds} is a drop-in replacement for
    $\mathrm{Relax}$ at \autoref{line:acrown:relax} of \autoref{alg:acrown} wherever
    that algorithm reaches a \prelu{} layer.
    \acrown{} still treats $\alpha$ as trainable, projected onto $[a, 1]$.
    A bound is \emph{sound} if it is guaranteed to contain the true value: a sound interval $[\underline y,\overline y]$ satisfies $\underline y\le y\le\overline y$ for every value $y$ the network can actually produce, even if $[\underline y,\overline y]$ is wider than necessary.
    Soundness is what makes a certificate trustworthy~\cite{duong2025generating}: an unsound bound could silently miss a bad case, the way an underestimated link margin could silently miss an outage.
    \relu{}'s relaxation is sound over $\alpha\in[0,1]$; \autoref{prop:prelu-sound} establishes the guarantee for \prelu{}.

    \begin{proposition}[\prelu{} relaxation soundness]
      \label{prop:prelu-sound}
      For $\underline z<0<\overline z$, $a\le1$, $z\in[\underline z,\overline z]$, and
      $\alpha\in[0,1]$ we have
      \[
        \bigl(a+\alpha(1-a)\bigr)z \;\le\; \mathrm{PReLU}(z) \;\le\; \bar d z+\bar b,
      \]
      where $\bar d=\tfrac{\overline z-a\underline z}{\overline z-\underline z}$ and
      $\bar b=a\underline z-\bar d \underline z$.
    \end{proposition}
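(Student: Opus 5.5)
The proof splits on the sign of $z$ and treats the two inequalities separately. Throughout, write $\beta := a+\alpha(1-a)$. Since $a\le1$ and $\alpha\in[0,1]$, $\beta$ is a convex combination of $a$ and $1$, so $a\le\beta\le1$. This is the only place the hypotheses on $a$ and $\alpha$ enter the lower bound.

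\emph{Lower bound.} For $z\ge0$ we have $\mathrm{PReLU}(z)=z\ge\beta z$, because $\beta\le1$ and $z\ge0$. For $z<0$ we have $\mathrm{PReLU}(z)=az\ge\beta z$, because $a\le\beta$ and multiplying by the negative number $z$ reverses the inequality. No use of $\underline z,\overline z$ is needed here; the lower line is valid on all of $\mathbb{R}$.

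\emph{Upper bound.} First check that the line $\ell(z):=\bar d z+\bar b$ interpolates the two endpoints of the graph on $[\underline z,\overline z]$:
\[
\ell(\underline z)=a\underline z,
\qquad
\ell(\overline z)=\bar d(\overline z-\underline z)+a\underline z=\overline z-a\underline z+a\underline z=\overline z .
\]
So $\ell$ is exactly the chord joining $(\underline z,a\underline z)$ and $(\overline z,\overline z)$. Next, since $a\le1$, we can write $\mathrm{PReLU}(z)=\max(z,az)$. Indeed, for $z\ge0$ we have $z\ge az$, and for $z<0$ we have $az\ge z$. A maximum of two linear functions is convex. A convex function lies below its chord on the interval between the chord's endpoints, which gives $\mathrm{PReLU}(z)\le\ell(z)$ for all $z\in[\underline z,\overline z]$.

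If one prefers to avoid the convexity argument, an elementary check works instead. The difference $\ell(z)-\mathrm{PReLU}(z)$ is affine on each of $[\underline z,0]$ and $[0,\overline z]$. It is therefore enough that it is nonnegative at the three points $\underline z,0,\overline z$. It vanishes at the two endpoints, and at the origin it equals $\bar b=\underline z(a-\bar d)$. Here $\underline z<0$, and
\[
a-\bar d=\frac{a(\overline z-\underline z)-\overline z+a\underline z}{\overline z-\underline z}=\frac{(a-1)\overline z}{\overline z-\underline z}\le0 .
\]
Hence $\bar b\ge0$.

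The main care point is where the hypothesis $a\le1$ is used. It is needed for the upper bound, through the convexity representation $\mathrm{PReLU}(z)=\max(z,az)$ or equivalently the sign of $\bar b$. It is also needed for the lower bound, to guarantee $a\le\beta\le1$. Without it the wedge would flip orientation and both bounds could fail.
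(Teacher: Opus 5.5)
Your proof is correct. The lower bound is the same as the paper's: set $\beta=a+\alpha(1-a)\in[a,1]$ and split on the sign of $z$.

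For the upper bound you take a different route. The paper stays purely algebraic and, in each sign case, rewrites the target as a factored sign condition. For $z<0$, $az\le\bar d z+\bar b$ becomes $\overline z(a-1)(z-\underline z)\le0$; for $z\ge0$, $z\le\bar d z+\bar b$ becomes $(1-a)\underline z(\overline z-z)\le0$. You instead identify $\ell$ as the chord through $(\underline z,a\underline z)$ and $(\overline z,\overline z)$, and use convexity of $\max(z,az)$, which holds exactly when $a\le1$.

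What each buys:
\begin{itemize}
\item Your argument explains why the upper line is sound: it is the secant of a convex function, as in the wedge figure. It carries over unchanged to any convex activation, and it shows the role of $a\le1$ as precisely the convexity condition.
\item The paper's computation is more mechanical but gives the explicit slack $\ell(z)-\mathrm{PReLU}(z)$ on each piece. That slack is linear, vanishes at $\underline z$ and $\overline z$, and peaks at $\bar b$ at the origin.
\end{itemize}

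Your fallback three-point check is essentially a compressed version of the paper's calculation. The value $\bar b=\underline z(a-\bar d)\ge0$ you compute at $0$ is exactly where the paper's two factored slack expressions meet.
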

    \begin{proof}
      Write $d=a+\alpha(1-a)\in[a,1]$. If $z<0$, then $dz\le az=\mathrm{PReLU}(z)$ since
      $d\ge a$. If $z\ge0$, then $dz\le z=\mathrm{PReLU}(z)$ since $d\le1$.
      If $z<0$, then $\mathrm{PReLU}(z)=az\le\bar d z+\bar b$ because
      $az\le\bar d z+\bar b \Leftrightarrow \overline z(a-1)(z-\underline z)\le0$, which
      holds since $\overline z>0$, $a\le1$, and $z\ge\underline z$.
      If $z\ge0$, then $\mathrm{PReLU}(z)=z\le\bar d z+\bar b$ because
      $z\le\bar d z+\bar b \Leftrightarrow (1-a)\underline z(\overline z-z)\le0$, which
      holds since $\underline z<0$, $a\le1$, and $z\le\overline z$. \qedhere
    \end{proof}

    The endpoints $\alpha\in\{0,1\}$ also cover the stable cases of \autoref{def:prelu-relax} ($d=a$ and $d=1$).
    Setting $a=0$ recovers \relu{}'s line and feasible interval $[0,1]$, so \autoref{prop:prelu-sound} strictly generalizes \deeppoly{}'s \relu{} soundness.

  \subsection{Substituting Transposed Convolution with Upsampling}
    \label{sec:propose:convtranspose}
    The decoder in~\cite{bourtsoulatze2019deep} decode the latent symbols with transposed convolutional layers,
    each layer doubling the spatial resolution until the output matches the source image.
    Transposed convolution's output-to-input map is transposed and fractionally strided,
    a sparsity pattern ordinary convolution does not share, so it needs its own linear-relaxation rule.

    Even ordinary convolution never forms this matrix explicitly,
    since it would not fit in a commercial graphics processing unit (GPU)'s memory;
    \acrown{}~\cite{xu2020automatic} propagates bounds through the convolution's own patch representation.
    A transposed-convolution relaxation would need an analogous patch-mode rule on top of the relaxation itself,
    which is a difficult engineering task.
    We substitute every decoder transposed convolution with nearest-neighbor upsampling followed by an ordinary convolution,
    both operations our bound propagation handles exactly or near-exactly.
    \autoref{prop:upsampleconv} below shows the substitution stays within the same operator family,
    stated in one spatial dimension with one channel; the two-dimensional multi-channel case applies the same identity per index and channel.
    Let $(u \ast v)[n] = \sum_{j} u[j] v[n-j]$ denote convolution.

    \begin{definition}[Transposed convolution]
      \label{def:convtranspose}
      The stride-$s$ transposed convolution of an input $x$ with kernel $w$ is
      \begin{equation}
        \label{eq:def:convtranspose}
        y[n] \;=\; \sum_{m} x[m]  w[n - sm] .
      \end{equation}
      Each input entry $x[m]$ places a copy of the kernel at position $sm$, so the output resolution grows by a factor of $s$.
    \end{definition}

    \begin{definition}[Upsample-then-convolution]
      \label{def:upsampleconv}
      Nearest-neighbor upsampling by $s$ copies each entry $s$ times: $x_{\mathrm{up}}[j] = x[\lfloor j/s \rfloor]$.
      Convolution with kernel $v$ then follows:
      \begin{equation}
        \label{eq:def:upsampleconv}
        y \;=\; x_{\mathrm{up}} \ast v .
      \end{equation}
    \end{definition}

    \begin{proposition}[Upsample-then-convolution is a transposed convolution]
      \label{prop:upsampleconv}
      For every kernel $v$, nearest-neighbor upsampling by $s$ followed by convolution with $v$ equals the stride-$s$ transposed convolution with kernel $w = \mathbf{1}_s \ast v$, where $\mathbf{1}_s$ is the all-ones kernel of length $s$.
    \end{proposition}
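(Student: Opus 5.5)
The plan is to write nearest-neighbor upsampling itself as a stride-$s$ transposed convolution and then absorb the following convolution into its kernel by reordering a finite double sum. Everything is computed directly from \autoref{def:convtranspose} and \autoref{def:upsampleconv}, using the convolution convention $(u\ast v)[n]=\sum_j u[j]v[n-j]$ fixed just before them. Let $\mathbf{1}_s[t]=1$ for $t\in\{0,\ldots,s-1\}$ and $0$ otherwise.

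\textbf{Step 1: upsampling is a transposed convolution.} For each output index $j$ there is exactly one integer $m$ with $0\le j-sm\le s-1$, namely $m=\lfloor j/s\rfloor$. Hence
\[
x_{\mathrm{up}}[j]=x[\lfloor j/s\rfloor]=\sum_m x[m]\,\mathbf{1}_s[j-sm],
\]
so $x_{\mathrm{up}}$ is the stride-$s$ transposed convolution of $x$ with kernel $\mathbf{1}_s$ in the sense of \eqref{eq:def:convtranspose}.

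\textbf{Step 2: absorb the convolution into the kernel.} Substituting Step 1 into \eqref{eq:def:upsampleconv} gives
\[
y[n]=\sum_j \sum_m x[m]\,\mathbf{1}_s[j-sm]\,v[n-j].
\]
We swap the two sums and reindex with $t=j-sm$. The inner sum becomes $\sum_t \mathbf{1}_s[t]\,v[(n-sm)-t]=(\mathbf{1}_s\ast v)[n-sm]$. Therefore
\[
y[n]=\sum_m x[m]\,w[n-sm],\qquad w=\mathbf{1}_s\ast v,
\]
which is exactly \eqref{eq:def:convtranspose}.

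\textbf{Main obstacle.} The one step needing care is justifying the swap of sums and making the indexing conventions line up. I would assume $x$ and $v$ have finite support, which holds for images and kernels, so every sum is finite and the interchange is trivial. I would also note that boundary cropping or padding conventions act identically on both sides, because they are applied to the same output sequence $y$.

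\textbf{Multi-channel extension.} Since $x_{\mathrm{up}}$ does not mix channels, the same per-index argument applies to each input/output channel pair. Summing over input channels gives the transposed-convolution kernel $w_{c',c}=\mathbf{1}_{s\times s}\ast v_{c',c}$. The identity then holds in each axis, and the 2D box $\mathbf{1}_{s\times s}$ is the tensor product $\mathbf{1}_s\otimes\mathbf{1}_s$.

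We would close by remarking why the family is \emph{restricted}. Every kernel of the form $\mathbf{1}_s\ast v$ is a transposed-convolution kernel, but not conversely: for example, such kernels vanish at the $s$-th roots of unity in the $z$-transform, except at $1$.
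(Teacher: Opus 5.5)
Your proof is correct and is essentially the paper's argument. The paper routes through the zero-inserted sequence $x_0$, writing $x_{\mathrm{up}} = x_0 \ast \mathbf{1}_s$ and the transposed convolution as $x_0 \ast w$ before invoking associativity, whereas you fold those two facts into your Step 1 and carry out the associativity computation inline by swapping the finite sums (your multi-channel and ``restricted family'' remarks are also correct).
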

    \begin{proof}
      Let $x_0$ insert $s-1$ zeros between entries: $x_0[j] = x[j/s]$ if $s$ divides $j$ and $x_0[j]=0$ otherwise.
      First, $(x_0 \ast w)[n] = \sum_j x_0[j]  w[n-j] = \sum_m x[m]  w[n-sm]$, which is~\eqref{eq:def:convtranspose}: convolving the zero-inserted input is exactly the stride-$s$ transposed convolution.
      Second, $x_{\mathrm{up}} = x_0 \ast \mathbf{1}_s$: entry $j$ of $x_0 \ast \mathbf{1}_s$ sums $x_0[j-i]$ over $0 \le i < s$, and the only multiple of $s$ among $j-s+1,\dots,j$ is $s\lfloor j/s\rfloor$, leaving $x[\lfloor j/s\rfloor]$.
      By associativity, $x_{\mathrm{up}} \ast v = (x_0 \ast \mathbf{1}_s) \ast v = x_0 \ast (\mathbf{1}_s \ast v)$, the stride-$s$ transposed convolution with kernel $\mathbf{1}_s \ast v$. \qedhere
    \end{proof}

    By \autoref{prop:upsampleconv}, the substituted layer produces the same output shape
    and stays in the same operator family as the transposed convolution it replaces.
    We empirically confirm this substitution preserves \djscc{}'s reconstruction performance (\autoref{sec:experiment}).

  \subsection{Structural Perturbation of Fading Residual Interval}
    \label{sec:propose:rayleigh}
    The Rayleigh interval~\eqref{eq:rayleigh-property} is sound but loose:
    each per-symbol radius $r_i=2\sin(\tfrac{\delta}{2})|s_i|+\kappa\sigma(\gamma)$ is derived from $h$'s worst case for that symbol alone,
    though all $k$ symbols share one residual $h$.
    The true reachable set is therefore a one-dimensional curve inside $\mathbb{R}^{2k}$,
    yet the per-symbol interval inflates it to a full $2k$-dimensional hypercube.

    Recent work on structured-perturbation verification~\cite{duong2026verifying,duong2026dual} argues
    that whenever a perturbation is generated by fewer degrees of freedom than the space it inhabits,
    the verifier should reason over the generators directly.
    Prepending the generative map as a DNN layer and bounding only the low-dimensional input
    then lets bound propagation track the structural constraint exactly, without inter-coordinate slack.
    \autoref{sec:prelim:djscc} already states the exact channel formula $\hat s=hs+n$, while \autoref{eq:rayleigh-property} only certifies a fixed, per-symbol over-approximation of its effect.
    We propose replacing $\hat s$ as the verifier's input with the two quantities that generate it, $h$ and $n$: $h$ is bounded by its support $\phi_h^\delta$ (\autoref{def:input-property}), and $n$ keeps its existing noise-interval bound.
    We prepend a layer to the network graph that computes $hs$ directly, broadcasting the single $h$ across all $k$ symbols, and recover the original formula $\hat s=hs+n$ by adding the existing noise term $n$.

    $h\mapsto hs$ is exactly linear in $h$, since $s$ is the fixed, known latent for the instance being verified.
    In real coordinates it is an exact block-diagonal $2k\times2$ matrix multiply, one rotation-scaling block $M_i$ per symbol, all sharing the same $h$:
    \begin{equation}
      \label{eq:rotation-scaling-block}
      \begin{pmatrix} \mathrm{Re}(hs_i) \\ \mathrm{Im}(hs_i) \end{pmatrix}
      = \begin{pmatrix} \mathrm{Re}(s_i) & -\mathrm{Im}(s_i) \\ \mathrm{Im}(s_i) & \mathrm{Re}(s_i) \end{pmatrix}
      \begin{pmatrix} \mathrm{Re}(h) \\ \mathrm{Im}(h) \end{pmatrix}.
    \end{equation}
    This affine layer is exactly what our bound propagation algorithms already handles.
    The addition with $n$ is exact as it already is for the AWGN case.

    The fading arc $\phi_h^\delta=\{e^{j\theta}:|\theta|\le\delta\}$ (solid line in \autoref{fig:fading-arc}) is enclosed by independent intervals
    $\mathrm{Re}(h)\in[\cos\delta,1]$, $\mathrm{Im}(h)\in[-\sin\delta,\sin\delta]$ (shaded).
    \begin{figure}[t]
      \centering
      \drawFadingArc{1.15}
      \caption{Fading residual $h$ and its interval enclosure.}
      \label{fig:fading-arc}
    \end{figure}
    Exposing $h$ turns the joint input into the product domain
    \begin{equation}
      \label{eq:rayleigh-joint-property}
      \phi_{in} \;=\; \phi_h^\delta \times \bigl\{n : \|n\|_\infty \le \kappa\sigma(\gamma)\bigr\}.
    \end{equation}
    The joint input is then enclosed by an interval on $(h,n)$,
    so Hölder concretization~\eqref{eq:holder-concretize} ($p{=}\infty$, $q{=}1$) applies.

  \subsection{GloRo for DJSCC}
    \label{sec:propose:gloro}
    We instantiate \gloro{} for \djscc{}'s decoder $\mathcal{D}$: the task loss $\mathcal{L}$ is the MSE reconstruction loss, and the penalty $g(K_\mathcal{D})$ is derived below from the input property.

    The decoder is built entirely from convolutions, some preceded by upsampling (\autoref{sec:propose:convtranspose}), \prelu{} activations, and a final sigmoid.
    \autoref{eq:gloro-K}'s product $K_\mathcal{D}\le\prod_{l=1}^{L_\mathcal{D}}K_\mathcal{D}^{(l)}$ extends to the whole decoder with two more per-layer constants.
    Nearest-neighbor upsampling by $2$ in each spatial dimension replicates every pixel into a $2\times2$ block, so its Lipschitz constant is $2$.
    The output sigmoid has Lipschitz constant $\tfrac14$, its maximum slope.
    Every layer's constant is known in closed form, so $K_\mathcal{D}$ is cheap to compute.
    We derive the radius for \gloro{} training from the input property:
    by the Lipschitz definition \autoref{def:lipschitz},
    $\|\mathcal{D}(\hat s)-\mathcal{D}(s)\|_2\le K_\mathcal{D}\|\hat s-s\|_2\le K_\mathcal{D} r$,
    where $r$ is the $\ell_2$ radius of the input property around the clean latent $s$:
    the smallest value with $\|\hat s-s\|_2\le r$ for every $\hat s\in\phi_{in}$ (\autoref{def:input-property}).
    Under AWGN, each of the $2k$ real axes of $\hat s-s=n$ stays within $\kappa\sigma(\gamma)$ (\autoref{eq:awgn-property}), so $r_{\mathrm{AWGN}}=\sqrt{2k}\,\kappa\sigma(\gamma)$.

    Because $\mathrm{MSE}$ averages over the $n$ entries of $x$, the triangle inequality gives, for every $\hat s\in\phi_{in}$,
    $\sqrt{\mathrm{MSE}(x,\mathcal{D}(\hat s))}\le\sqrt{\mathrm{MSE}(x,\mathcal{D}(s))}+K_\mathcal{D} r/\sqrt{n}$.
    Only the second term depends on $K_\mathcal{D}$, so we add its square to the reconstruction loss:
    \begin{equation}
      \label{eq:gloro-djscc-loss}
      \mathcal{L}_{\mathrm{GloRo}}(\theta) \;=\; \mathrm{MSE}(x,\mathcal{D}(\hat s)) \;+\; \lambda \frac{(K_\mathcal{D} r)^2}{2k},
    \end{equation}
    where $\hat s$ is sampled through the channel as in regular training, and the penalty equals $(K_\mathcal{D} r)^2/n$ up to the constant factor $n/(2k)$, which $\lambda$ absorbs.
    We fine-tune from a pretrained checkpoint, updating only the decoder.
    A smaller $K_\mathcal{D}$ tightens the certified $[\underline{y},\overline{y}]$ that DNN verifier computes.

    Under Rayleigh fading, we keep the same loss form~\eqref{eq:gloro-djscc-loss}.
    By \autoref{prop:rayleigh-sound}, $|h-1|\le2\sin(\tfrac{\delta}{2})$, and the $hs$-layer linearity (\autoref{sec:propose:rayleigh}) gives $\|\hat s-s\|_2\le2\sin(\tfrac{\delta}{2})\|s\|_2+\sqrt{2k}\,\kappa\sigma(\gamma)$.
    With $\|s\|_2=\sqrt{k}$ (\autoref{sec:prelim:djscc}), the Rayleigh radius is:
    \begin{equation}
      \label{eq:gloro-rayleigh-radius}
      r_{\mathrm{Rayleigh}}(\gamma,\delta) \;=\; \sqrt{k}\,\bigl(2\sin(\tfrac{\delta}{2})+\sqrt{2}\,\kappa\sigma(\gamma)\bigr).
    \end{equation}
    Substituting $r_{\mathrm{Rayleigh}}$ for $r$ in \eqref{eq:gloro-djscc-loss} adapts \gloro{} training to Rayleigh fading.

\section{Experiments}
  \label{sec:experiment}

  Our evaluation\footnote{Code: \url{https://anonymous.4open.science/r/verify-djscc-33C8/}} answers the following research questions:
  \noindent\textbf{RQ1}: Do our verifier-compatibility choices preserve \djscc{}'s reconstruction quality? \\
  \noindent\textbf{RQ2}: How does \djscc{}'s certified robustness vary across different channel assumptions? \\
  \noindent\textbf{RQ3}: How much does \gloro{} tighten the certified bound? \\
  \noindent\textbf{RQ4}: Does structural encoding of property help? \\
  \noindent\textbf{RQ5}: Does theoretical safe certificate transfer to over-the-air software-defined-radio (SDR) transmission? \\

  \subsection{Experimental Design}
    \label{sec:experiment:setup}

    \noindent\textbf{Models and data.}
    We evaluate three decoder sizes (\autoref{tab:model-sizes}):
    \texttt{small}, \texttt{medium}, and \texttt{large},
    all with the upsample-then-convolution decoder of \autoref{sec:propose:convtranspose}.
    Models are trained on CIFAR-10~\cite{krizhevsky2009learning} for $100$ epochs under a randomized training-time SNR curriculum ($\gamma\sim\mathcal{U}(1,13)$ dB).

    We formally verify the models at SNRs $\gamma\in\{10,15,20,25,30\}$ dB.
    Under Rayleigh fading, the decoder additionally trains under a randomized residual-phase curriculum ($\delta\sim\mathcal{U}(0^\circ,30^\circ)$ per batch,
    \autoref{sec:prelim:djscc}); verification and \gloro{} certification both sweep estimator accuracy at $\delta\in\{0^\circ,2.5^\circ,5^\circ,7.5^\circ,10^\circ\}$.

    \begin{table}[t]
      \centering
      \caption{Decoder configurations evaluated.}
      \label{tab:model-sizes}
      \begin{tabular}{@{}lccc@{}}
        \toprule
        & \texttt{small} & \texttt{medium} & \texttt{large} \\
        \midrule
        Latent dimension $k$ & 32 & 192 & 256 \\
        Encoder layers & 3 & 4 & 5 \\
        Decoder layers & 4 & 4 & 5 \\
        Parameters & 7.7k & 29.1k & 143.4k \\
        \bottomrule
      \end{tabular}
    \end{table}

    \noindent\textbf{Verifiers.}
    We compare \ibp{}, \deeppoly{}, and \acrown{}, all implemented in our own bound-propagation Python library
    since existing libraries do not cover the \prelu{} relaxation of \autoref{sec:propose:prelu} out of the box.
    On standard fully connected and convolutional \relu{} networks, our implementations produce bounds that match auto\_LiRPA~\cite{xu2020automatic}.

    \noindent\textbf{Training methods.}
    \naive{} training minimizes $\mathrm{MSE}$ loss with Adam~\cite{kingma2014adam}.
    \gloro{} training follows \autoref{sec:propose:gloro}~\cite{leino2021globally}, fine-tuning a \naive{}-trained checkpoint for 20 epochs.

    Spectral norms use 10 power-iteration steps per weight matrix per training step.
    We set $\lambda{=}10^{-4}$ in \autoref{eq:gloro-djscc-loss}: higher values collapse PSNR across all three model sizes, while lower values leave $K_\mathcal{D}$ unchanged; $\lambda{=}10^{-4}$ reduces $K_\mathcal{D}$ while preserving reconstruction quality for all sizes.

    \noindent\textbf{Metrics.}
    We use PSNR to compare training runs.
    Certified bounds are on MAE (\autoref{def:output-property}):
    every $[\underline{y},\overline{y}]$ below bounds $\mathrm{MAE}$, not $\mathrm{PSNR}$,
    and we report the upper bound $\overline{y}$.

    \noindent\textbf{Runtime.}
    All experiments run on a single NVIDIA RTX 4080 SUPER GPU (16 GB), paired with an AMD Ryzen 9 5900X 12-core central processing unit (CPU) and 128 GB of DDR4-3200 random-access memory (RAM).

    \noindent\textbf{SDR hardware.}
    The over-the-air experiment (\autoref{sec:experiment:sdr}) uses two ADALM-PlutoSDR transceivers on a 915 MHz carrier at a 2 MHz sample rate.
    A Raspberry Pi 4 controls both transceivers, one transmitting and one receiving, and stores the received frames for offline decoding.
    We attenuate the transmitter by 20 dB and hold the receiver gain fixed at 30 dB.
    Higher settings saturated the receiver's analog-to-digital converter, and the resulting clipping destroyed frame coherence.

    Each frame carries one OFDM symbol, a 512-point inverse fast Fourier transform (IFFT) with a 64-sample cyclic prefix, behind an 11-chip Barker preamble for frame synchronization.
    Every fourth subcarrier carries a known binary phase-shift keying (BPSK) pilot, which leaves 384 free bins, of which the first $k{=}256$ carry the \djscc{} latents.
    The receiver estimates the channel from those pilots by least squares, interpolates it across bins, and equalizes amplitude and phase together.
    Before the image transmissions, 300 probe frames of complex Gaussian tones characterize the link.
    We wipe the phase from each frame, then take the 99th percentile of the per-tone residual phase as $\delta$ and the error power as the operating SNR.

  \subsection{Effect of Verifier-Compatibility Changes}
    \label{sec:experiment:compatibility}

    We isolate each verifier-compatibility change from \autoref{sec:propose} with a paired ablation against the original \djscc{} decoder (\prelu{}, transposed convolution), training all three model sizes under the AWGN channel with \naive{} training (\autoref{sec:experiment:setup}).
    \autoref{fig:rq1} reports test PSNR and MAE for the original decoder,
    the upsample-then-convolution substitution (\autoref{sec:propose:convtranspose}),
    and a \relu{} decoder in place of \prelu{} (\autoref{sec:propose:prelu}).
    Replacing \prelu{} with \relu{} cost $1.78$, $2.01$, and $2.22$ dB of PSNR on \texttt{small}, \texttt{medium}, and \texttt{large} respectively,
    a $23\%$, $29\%$, and $28\%$ increase in MAE.
    The penalty grew with model size because a deeper decoder stacks more activation layers, and each one discards the negative-side signal that its learned slope would otherwise carry.
    This supports \autoref{sec:propose:prelu}'s claim that \prelu{}'s single extra parameter per layer is worth its added relaxation cost.
    Verifying a \relu{}-substituted decoder would answer a question about a different, worse model, which is why we derive a \prelu{} relaxation instead of substituting the activation away.
    The upsample-then-convolution substitution left reconstruction quality essentially intact: PSNR moved by at most $0.35$ dB across all sizes ($+0.05$ dB on \texttt{small}, $-0.35$ dB on \texttt{medium}, $-0.28$ dB on \texttt{large}), and MAE by under $7\%$ at every size.
    The sign of the change varied with model size, which suggests the residual gap reflects training variation rather than a systematic loss.

    This is consistent with \autoref{prop:upsampleconv}: both layers belong to the same operator family.
    \autoref{sec:propose:convtranspose}'s claim therefore holds: the substitution buys a verifiable decoder at a small cost.
    \begin{figure}[t]
      \centering
      \begin{subfigure}[b]{0.48\linewidth}
        \centering
        \includegraphics[width=\linewidth]{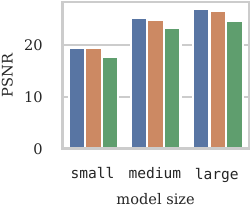}
        \caption{$\mathrm{PSNR}$}
        \label{fig:rq1-psnr}
      \end{subfigure}
      \hfill
      \begin{subfigure}[b]{0.48\linewidth}
        \centering
        \includegraphics[width=\linewidth]{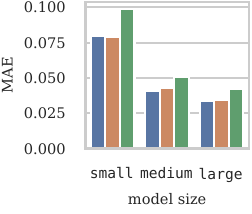}
        \caption{$\mathrm{MAE}$}
        \label{fig:rq1-mae}
      \end{subfigure}
      \\[0.2em]
      \includegraphics[width=\linewidth]{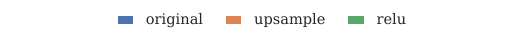}
      \caption{Reconstruction quality of the original, upsample-then-convolution, and \relu{} decoders.}
      \label{fig:rq1}
    \end{figure}

  \subsection{Certified Robustness Across Channel Models}
    \label{sec:experiment:channel}
    \begin{figure}[t]
      \centering
      \includegraphics[width=\linewidth]{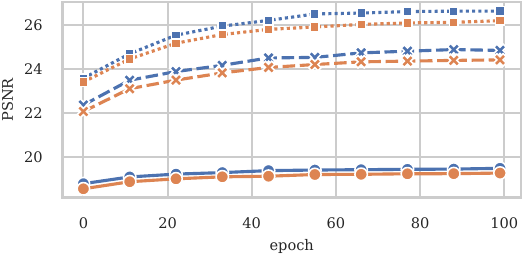}
      \\[0.2em]
      \includegraphics[width=\linewidth]{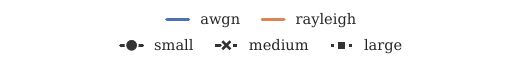}
      \caption{Test PSNR over 100 training epochs for all three model sizes under AWGN and Rayleigh fading.}
      \label{fig:rq2-convergence}
    \end{figure}
    \autoref{fig:rq2-convergence} reports test PSNR at a fixed evaluation SNR of 10 dB, the midpoint of the training curriculum $\gamma \sim \mathcal{U}(1, 13)$ dB.
    All models improved rapidly during the first 20 epochs, then slowed and plateaued.
    The \texttt{small} model achieved 99\% of its peak PSNR by epochs 35 to 38, while the \texttt{medium} and \texttt{large} models converged between epochs 60 and 65.

    Rayleigh-trained models demonstrated similar initial improvement rates but plateaued at lower performance levels.
    This outcome is attributed to the fading residual $h = e^{j\theta}$ (\autoref{sec:prelim:djscc}), which represents a phase rotation absent in AWGN training.
    Consequently, the Rayleigh channel is harder to decode.
    For the \texttt{small}, \texttt{medium}, and \texttt{large} models, Rayleigh-trained variants exhibited mean squared error (MSE) increases of $5.1\%$, $11.4\%$, and $10.5\%$, respectively, compared to their AWGN counterparts, corresponding to PSNR reductions of $0.22$, $0.47$, and $0.43$ dB.

    Model size influenced final PSNR more than channel type.
    The \texttt{small} model, using $94.6\%$ fewer parameters than \texttt{large}, achieved a PSNR that was $6.9$ dB lower under both channel conditions.
    The \texttt{medium} model, with $79.7\%$ fewer parameters ($4.9\times$ smaller), achieved PSNR values within $1.6$ dB of \texttt{large} under AWGN and within $1.8$ dB under Rayleigh fading.
    \begin{figure}[t]
      \centering
      \includegraphics[width=0.8\linewidth]{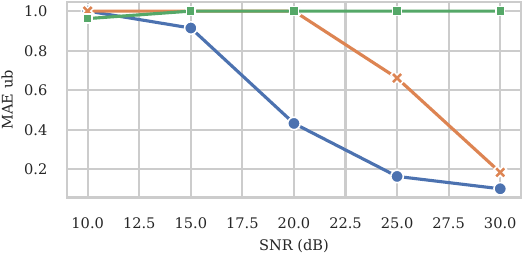}
      \\[0.2em]
      \includegraphics[width=0.8\linewidth]{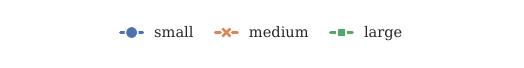}
      \caption{Certified MAE upper bound $\overline{y}$ under AWGN vs.\ SNR, per model size (\deeppoly{}).}
      \label{fig:rq2-awgn}
    \end{figure}

    \begin{figure}[t]
      \centering
      \includegraphics[width=\linewidth]{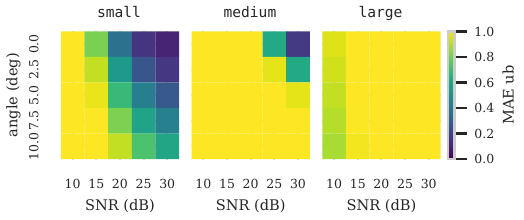}
      \caption{Certified MAE upper bound $\overline{y}$ under Rayleigh fading vs.\ SNR and residual phase $\delta$, per model size (\deeppoly{}).}
      \label{fig:rq2-rayleigh}
    \end{figure}

    \autoref{fig:rq2-awgn} shows the certified MAE upper bound $\overline{y}$ under AWGN.
    For \texttt{small} ($k{=}32$), $\overline{y}$ fell from $1.23$ at SNR $10$ dB to $0.10$ at $30$ dB.
    For \texttt{medium} ($k{=}192$), $\overline{y}$ stayed above $1$ until SNR $25$ dB and reached $0.19$ at $30$ dB.
    For \texttt{large} ($k{=}256$), $\overline{y}$ never dropped below $1$: bound propagation across $256$ latent dimensions accumulated too much interval error for the verifier to certify anything useful.
    Certification is more tractable for smaller models, as fewer latent dimensions constrain interval expansion.
    \autoref{fig:rq2-rayleigh} extends the analysis to Rayleigh fading, sweeping SNR and residual phase $\delta \in \{0^\circ, 2.5^\circ, 5^\circ, 7.5^\circ, 10^\circ\}$.
    At $\delta{=}0^\circ$, the bound matched the AWGN result; as $\delta$ increases, the fading term $2\sin(\delta/2)|s_i|$ in the input radius (\autoref{eq:rayleigh-property}) widens the input property and loosens the bound.

    For \texttt{small} at $30$ dB, $\overline{y}$ rose from $0.10$ at $\delta{=}0^\circ$ to $0.59$ at $\delta{=}10^\circ$.
    For \texttt{medium}, the bound was only useful at $\delta{=}0^\circ$ ($\overline{y}{=}0.17$ at $30$ dB); any residual phase pushed $\overline{y}$ above $0.6$.
    For \texttt{large}, $\overline{y}$ exceeded $1$ in all conditions.
    SNR and $\delta$ jointly determine $\overline{y}$.
    Higher SNR reduces the noise radius, whereas larger $\delta$ increases the fading radius, resulting in competing effects.

    The dominant factor between SNR and $\delta$ has significant implications for system design.
    For the \texttt{small} model at the highest tested SNR, increasing the residual phase error from a perfect estimate to $10^\circ$ raised $\overline{y}$ from $0.10$ to $0.59$.
    This represents an approximate sixfold increase that cannot be mitigated by increasing transmit power, as the noise radius is negligible at $30$ dB and the fading radius remains independent of SNR.
    Therefore, when a link operates well above the noise floor, the channel estimator's accuracy, rather than transmit power, determines the strength of the certificate the receiver can provide.
    This observation also clarifies why per-symbol interval encoding becomes the limiting constraint at large $\delta$, thereby motivating the structural encoding approach evaluated in \autoref{sec:experiment:structural}.

  \subsection{Effect of \gloro{} Training on Certified Robustness}
    \label{sec:experiment:gloro}
    \autoref{fig:rq3-convergence} presents \gloro{} fine-tuning initiated from the naive checkpoint across 20 epochs.
    The figure tracks test PSNR, the Lipschitz constant $K_\mathcal{D}$, and robust loss for each model size under AWGN and Rayleigh fading.
    The Lipschitz constant $K_\mathcal{D}$ decreased substantially during the first five epochs, from $41.9$ to $6.5$ for \texttt{small} and from $215.6$ to $25.5$ for \texttt{large}.
    All three model sizes stabilized by epoch 10.
    Twenty epochs of fine-tuning were sufficient, as the remaining ten epochs resulted in minimal changes to $K_\mathcal{D}$.

    The PSNR cost associated with regularizing $K_\mathcal{D}$ was $0.72$ dB for \texttt{small}, $1.14$ dB for \texttt{medium}, and $2.01$ dB for \texttt{large} under AWGN.
    The corresponding Rayleigh fading costs were similar, at $0.73$, $1.09$, and $2.11$ dB, respectively.
    Larger models incurred greater PSNR costs because their $K_\mathcal{D}$ values were higher and required more contraction.
    \begin{figure}[t]
      \centering
      \includegraphics[width=\linewidth]{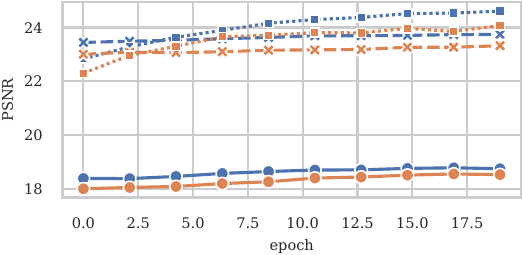}
      \\[0.2em]
      \includegraphics[width=\linewidth]{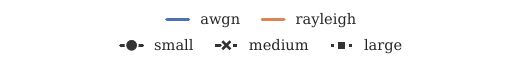}
      \caption{\gloro{} fine-tuning over 20 epochs per model size under AWGN and Rayleigh fading.}
      \label{fig:rq3-convergence}
    \end{figure}

    \autoref{fig:rq3-cdf} plots the certified MAE upper bound $\overline{y}$ as a cumulative distribution function (CDF) per model size, comparing naive and \gloro{} training under AWGN and Rayleigh fading ($\delta{=}0^\circ$, all SNRs pooled).
    For \texttt{small} under AWGN, $\overline{y}$ fell from $1.23$ to $0.53$ at SNR $10$ dB, a $57\%$ reduction.
    At SNR $30$ dB, naive and \gloro{} both reached $\overline{y} \approx 0.10$: the AWGN noise ball dominates the bound at high SNR, so tightening $K_\mathcal{D}$ adds nothing there.
    For \texttt{medium}, the gain concentrated at high SNR: $\overline{y}$ dropped from $0.19$ to $0.08$ at $30$ dB.
    For \texttt{large}, naive training never certified anything useful ($\overline{y} > 1$ at every SNR), whereas \gloro{} brought $\overline{y}$ to $0.85$ at $10$ dB and $0.08$ at $30$ dB, making \texttt{large} certifiable in our experiments.
    At $\delta{=}0^\circ$, Rayleigh bounds fell by comparable fractions.
    \gloro{} shifted the entire bound distribution to the left for every model size and both channels, indicating that the improvement is not limited to a subset of images.

    The two effects of \gloro{} oppose each other, and the dominant effect depends on the specific model.
    For \texttt{large}, a PSNR cost of $2.01$ dB resulted in the difference between no certificate and obtaining a certificate at every tested SNR, making this trade-off advantageous.
    For \texttt{small} at $30$ dB, naive and \gloro{} approaches achieved $\overline{y}\approx0.10$, indicating that the $0.72$ dB cost did not yield a tighter bound.

    This outcome occurs because $\overline{y}$ is determined by the term that dominates the input radius.
    When the noise ball is dominant, as at high SNR on a small latent, reducing $K_\mathcal{D}$ cannot further tighten a bound already constrained by the input region.
    \gloro{} is most effective when the bound's looseness stems primarily from the decoder's amplification rather than the channel, which characterizes the operational regime of all larger models.
    \begin{figure}[t]
      \centering
      \begin{subfigure}[b]{0.32\linewidth}
        \centering
        \includegraphics[width=\linewidth]{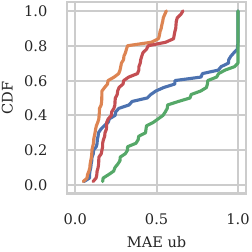}
        \caption{\texttt{Small}}
        \label{fig:rq3-cdf-small}
      \end{subfigure}\hfill
      \begin{subfigure}[b]{0.32\linewidth}
        \centering
        \includegraphics[width=\linewidth]{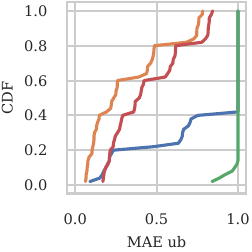}
        \caption{\texttt{Medium}}
        \label{fig:rq3-cdf-medium}
      \end{subfigure}\hfill
      \begin{subfigure}[b]{0.32\linewidth}
        \centering
        \includegraphics[width=\linewidth]{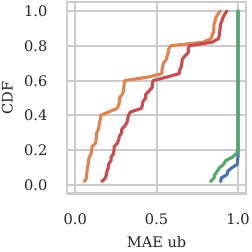}
        \caption{\texttt{Large}}
        \label{fig:rq3-cdf-large}
      \end{subfigure}
      \\[0.2em]
      \includegraphics[width=\linewidth]{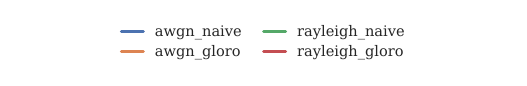}
      \caption{CDF of the certified MAE upper bound $\overline{y}$ for naive and \gloro{}-trained decoders.}
      \label{fig:rq3-cdf}
    \end{figure}

  \subsection{Joint Verification of the Fading Coefficient}
    \label{sec:experiment:structural}

    \autoref{fig:rq4-angle-line} compares interval and structural encoding as the residual phase $\delta$ grows from $0^\circ$ to $10^\circ$ on the \texttt{small} model at SNR $20$ dB.
    At $\delta{=}0^\circ$, both approaches coincided at $\overline{y}\approx 0.32$: there is no phase uncertainty, so the per-symbol interval is already tight.
    As $\delta$ increases, the interval bound rises sharply, reaching $\overline{y}\approx 0.52$ at $\delta{=}10^\circ$.
    This occurs because each of the $2k$ interval half-widths increases by $2\sin(\delta/2)|s_i|$ (\autoref{eq:rayleigh-property}), and these independent excesses accumulate.
    The structural bound stays nearly flat over the same range, at $\overline{y}\approx 0.33$ when $\delta{=}10^\circ$.
    Reasoning over the single shared $h$ (\autoref{sec:propose:rayleigh}) provides the verifier with two additional degrees of freedom, independent of $k$.
    As a result, the phase term contributes minimal slack.
    \autoref{fig:rq4-boxplot} extends the comparison across all three model sizes (S/M/L $=$ \texttt{small}/\texttt{medium}/\texttt{large}), pooling all SNRs and phases $\delta\in\{0^\circ,2.5^\circ,5^\circ,7.5^\circ,10^\circ\}$.
    For \texttt{small}, the structural median fell from $0.28$ to $0.19$; for \texttt{medium}, from $0.43$ to $0.27$; for \texttt{large}, from $0.46$ to $0.27$.

    The improvement increased with model size.
    Larger $k$ results in more per-symbol interval terms, so the fixed-dimensional input of structural encoding saves more accumulated slack.
    The narrowing of interquartile ranges indicates that the benefit is consistent across SNR and phase conditions, rather than influenced by a small number of instances.
    \begin{figure}[t]
      \centering
      \begin{subfigure}[b]{0.48\linewidth}
        \centering
        \includegraphics[width=\linewidth]{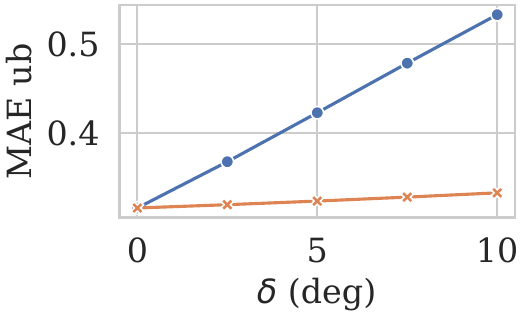}
        \\[0.2em]
        \includegraphics[width=\linewidth]{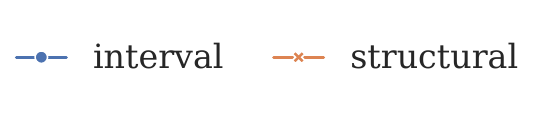}
        \caption{$\overline{y}$ vs.\ $\delta$ (\texttt{small}, SNR $20$ dB).}
        \label{fig:rq4-angle-line}
      \end{subfigure}
      \hfill
      \begin{subfigure}[b]{0.48\linewidth}
        \centering
        \includegraphics[width=\linewidth]{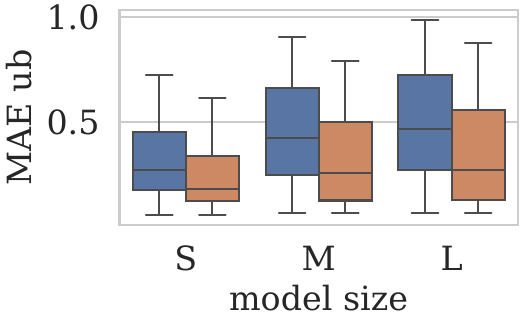}
        \\[0.2em]
        \includegraphics[width=\linewidth]{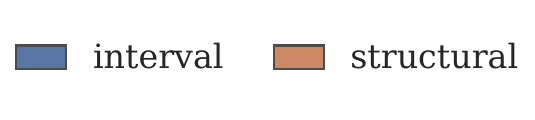}
        \caption{$\overline{y}$ per model size (all SNRs and $\delta$ pooled).}
        \label{fig:rq4-boxplot}
      \end{subfigure}
      \caption{Interval vs.\ structural encoding of the fading coefficient under Rayleigh fading.}
      \label{fig:rq4}
    \end{figure}

    Structural encoding substantially reduces $\delta$-sensitivity and lowers the bound across all model sizes (\autoref{fig:rq4}).
    Structural encoding replaces $2k$ interval inputs with a single affine layer over two real variables (\autoref{eq:rotation-scaling-block}).
    Consequently, the verifier propagates a smaller input region through one additional exact layer and performs less computation overall.
    This advantage contrasts with verification scenarios, where larger networks present greater computational challenges.
    The removed penalty scales with $k$, so structural encoding provides the greatest benefit for models where interval encoding is least effective.
    Together with \autoref{sec:experiment:channel}, these findings demonstrate that the fading radius limits the certificate once a link surpasses the noise floor.
    Structural encoding prevents this radius from scaling with the latent dimension.

  \subsection{Certified Safety Count of the Full Pipeline}
    \label{sec:experiment:cactus}
    The preceding sections measured $\overline{y}$ in isolation.
    A practical deployment question requires a broader perspective: specifically, the number of (image, threshold) pairs a verifier can certify and the time required to do so.
    We counted certified-safe pairs on the \texttt{large} model at $\delta{=}10^\circ$, the widest phase uncertainty we consider.

    Ten images, one from each CIFAR-10 class, combined with five SNRs, resulted in $50$ bound queries.
    Each query was evaluated against seven MAE thresholds $\tau\in\{0.25,0.275,\ldots,0.40\}$, resulting in $50{\times}7$ cactus instances per encoding.
    A cactus instance is a (property, network, threshold) query.
    \acrown{} is applied to each $\tau$ individually and terminates at the first pass where $\overline{y}<\tau$.

    \autoref{fig:rq5-cactus} presents a comparison between \deeppoly{} and \acrown{} on the \gloro{}-trained model, pooling results from both interval and structural encodings.
    Both verifiers certified the same $211$ instances: $19$ on interval and $192$ on structural encoding.
    \deeppoly{} completed each query in approximately $0.10$\,s.
    \acrown{} returned in about $0.11$\,s, because the first DeepPoly-equivalent pass met every $\tau$ that these runs cleared.
    Therefore, \deeppoly{} serves as the preferred default when the property is specified at these thresholds.
    \ibp{} did not certify any instance in any of the training or encoding configurations evaluated.
    Computation speed is inconsequential if the relaxation method fails to certify any instances.

    \autoref{fig:rq5-cactus-2} compares interval and structural encodings under \acrown{}, aggregating results from both \naive{} and \gloro{} training.
    Naive training did not certify any instances under either encoding.
    In the absence of Lipschitz regularization, the decoder amplified the input region excessively, preventing $\overline{y}$ from falling below the specified thresholds.
    \gloro{} combined with interval encoding certified $19$ instances.
    \gloro{} with structural encoding certified $192$ instances, $10\times$ more.
    Neither component substitutes for the other.
    \gloro{} is essential; without it, no instances are certified due to the dominant amplification effect of the decoder.
    Structural encoding converts the available headroom into certified instances at large $\delta$ by eliminating the per-symbol expansion that would re-inflate the input region.

    This ordering is consistent with the results presented in \autoref{sec:experiment:gloro}.
    \gloro{} addresses the decoder's contribution to the bound, while structural encoding targets the channel's contribution.
    These two methods operate on distinct terms, resulting in complementary rather than overlapping improvements.
    \begin{figure}[t]
      \centering
      \begin{subfigure}[b]{0.48\linewidth}
        \centering
        \includegraphics[width=\linewidth]{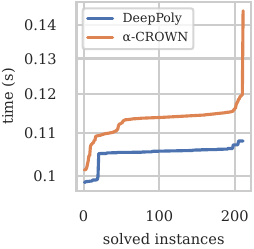}
        \caption{\deeppoly{} vs.\ \acrown{}.}
        \label{fig:rq5-cactus}
      \end{subfigure}
      \hfill
      \begin{subfigure}[b]{0.48\linewidth}
        \centering
        \includegraphics[width=\linewidth]{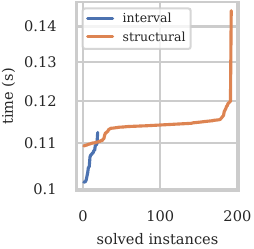}
        \caption{Interval vs.\ structural.}
        \label{fig:rq5-cactus-2}
      \end{subfigure}
      \caption{Certified-safe instance count vs.\ verification time.}
      \label{fig:rq5}
    \end{figure}

  \subsection{Over-the-Air Transfer to SDR}
    \label{sec:experiment:sdr}

    All previously established bounds are based on a channel model. This section evaluates whether a certificate derived from that model remains valid when applied to a real radio link.
    We characterize the channel between the two ADALM-PlutoSDR transceivers (\autoref{sec:experiment:setup}), verify the decoder at the measured operating point, then transmit and check the reconstructions against the bound.
    Pilot measurements yielded an operating SNR of $27.4$ dB and a $99$th-percentile residual phase of $\delta{=}5.4^\circ$, both within the ranges explored in \autoref{sec:experiment:channel}.
    Although the link exhibited minimal interference, its SNR significantly exceeded the $1$--$13$ dB range used during training, indicating that the decoder operated outside its original training conditions.
    \autoref{fig:rq6-grid} presents a representative input image and its corresponding over-the-air reconstruction at this operating point for each of the ten CIFAR-10 classes.
    \begin{figure}[t]
      \centering
      \includegraphics[width=\linewidth]{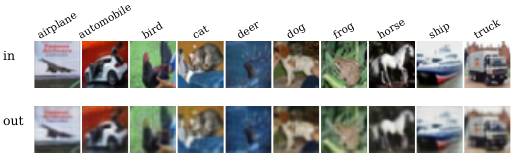}
      \caption{Input images (top) and over-the-air reconstructions (bottom) for all ten CIFAR-10 classes.}
      \label{fig:rq6-grid}
    \end{figure}

    \autoref{fig:rq6-stack} displays, for each class, the observed over-the-air reconstruction error (the maximum $\mathrm{MAE}$ over $30$ transmissions of a representative image) alongside its margin to the certified upper bound $\overline{y}$, such that each complete bar represents $\overline{y}$.
    The certificate held for every class: the observed maximum stayed below $\overline{y}$ in each class and in the aggregate, where the worst of $300$ frames ($0.0824$) stayed below the largest bound ($0.1278$).
    Class $7$ approached its bound most closely ($0.082$ versus $0.128$), while class $5$ exhibited the widest margin ($0.035$ versus $0.091$).
    Across all classes, the observed maximum ranged from $38$ to $65\%$ of the certified bound.
    The certificate is neither violated nor excessively permissive; it provides meaningful headroom without being so loose as to permit arbitrary behavior.

  Each bar aggregates $30$ transmissions of a representative image, for a total of $300$ over-the-air frames shown in the figure. Within each class, the $99$th percentile and the maximum differed by no more than $0.0001$.
    Reconstruction error demonstrated high repeatability: across $30$ independent transmissions through the radio link, the worst trial was only marginally greater than the median.
    The margin in \autoref{fig:rq6-stack} is stable slack against the channel model rather than the tail of a noisy measurement.
    This experimental design introduces one caveat.
    Each bar corresponds to a fixed image per class, so it evaluates the certificate at ten specific points rather than across the entire test set.
    The bound applies to every input within the noise region; the measurement confirms its validity at the sampled points.
    \begin{figure}[t]
      \centering
      \includegraphics[width=\linewidth]{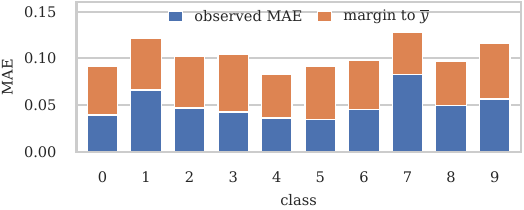}
      \caption{Per-class observed over-the-air $\mathrm{MAE}$ stacked with its margin to the certified bound $\overline{y}$.}
      \label{fig:rq6-stack}
    \end{figure}

    This guarantee is inherently conditional.
    It constrains the reconstruction error only while the link remains within the SNR and phase envelope measured by the pilot signals.
    A receiver operating outside this envelope requires a new certificate.
    This requirement is not onerous in practice, as re-verification at a new operating point can be performed in well under a second using \deeppoly{}, as described in \autoref{sec:experiment:cactus}.
    Consequently, a receiver can re-certify itself whenever its channel estimate changes, incurring a computational cost significantly lower than that of the protected transmission.
    The measurement demonstrates that the entire process—from channel modeling, through bound-propagation certification, to the physical radio link—remains valid end to end.
    The bounds the verifier produces offline continue to constrain the hardware's behavior accurately.

\section{Related Work}
  \label{sec:related}

  \subsection{Neural Network Verifiers}
    \label{sec:related:verifiers}
    Complete verifiers perform an exact search of the network's piecewise-linear structure. Reluplex~\cite{katz2017reluplex} extends the Simplex algorithm to \relu{} constraints. Marabou~\cite{katz2019marabou} incorporates theory solvers and parallel divide-and-conquer strategies. The DPLL(T)-based NeuralSAT~\cite{duong2023dpllt,duong2025neuralsat} explores activation patterns using bound propagation as its theory solver.
    Other approaches improve verification scalability by exploiting neuron stability~\cite{duong2024harnessing}, training networks to promote stable neurons~\cite{xu2024training}, decomposing networks into assume-guarantee subproblems~\cite{duong2025compositional}, and learning branching heuristics through reinforcement learning~\cite{duong2026reinforcement}. Our \gloro{} instantiation (\autoref{sec:propose:gloro}) adopts the training-based approach.

    Bound propagation, the foundation of this work, over-approximates each layer using a sound linear relaxation and back-substitutes these relaxations to the input.
    \ibp{}~\cite{wang2018formal} offers the fastest but least precise bounds. \deeppoly{}~\cite{singh2019abstract} back-substitutes symbolic linear bounds in a single pass. \crown{}~\cite{zhang2018efficient} and its successors \acrown{}~\cite{xu2020automatic,xu2021fast} and $\beta$-\crown{}~\cite{wang2021beta} optimize relaxation slopes using gradient descent.
    However, none of these tools provide native support for \prelu{} or transposed convolution (\autoref{sec:propose:prelu}, \autoref{sec:propose:convtranspose}).
    Bound propagation tools have won every recent International Verification of Neural Networks Competition (VNN-COMP)~\cite{kaulen2025vnncomp}. This consistent success motivates our decision to build upon this family of methods.

  \subsection{Applications of DNN Verification}
    \label{sec:related:applications}
    A growing line of work certifies wireless and telecommunication systems.
    Kim et al.~\cite{kim2025certified} certify robustness of an antenna-selecting network for massive MIMO; Le and Matsumura~\cite{le2026traffic} verify a regression network predicting mobile-traffic load; Le et al.~\cite{le2026power} verify a power-control policy network for massive MIMO; Le et al.~\cite{le2025fggm} use formal methods to craft adversarial attacks on a deep reinforcement learning (DRL)-based multi-user MIMO scheduler.
    Each targets a discrete decision or a single continuous quantity under one noise source.
    Our setting differs in three ways.
    We certify \djscc{}'s decoder under the physical AWGN and Rayleigh fading channel models.
    The property is a continuous reconstruction bound.
    Finally, no prior wireless verification addresses \prelu{}, transposed convolutions, or the per-symbol interval blow-up of a shared fading coefficient, which block off-the-shelf verifiers (\autoref{sec:intro}).

\section{Conclusion}
  \label{sec:conclusion}
  We addressed three primary challenges that prevent off-the-shelf verifiers from operating with \djscc{}'s decoder.
  \prelu{}, transposed convolution, and a fading channel shared across all latent symbols each allow for a principled reformulation compatible with existing bound-propagation methods: a relaxation that extends the \relu{} case, a layer substitution within the same operator family, and an adjustment of verifier input to the channel's actual low-dimensional generator.
  None of these reformulations sacrifices reconstruction quality for verifiability; in particular, structural encoding strengthens certificates. The resulting certificate remains valid under real radio channel conditions.

  Experimental results support these findings: \gloro{} training reduces the \texttt{large} decoder's Lipschitz constant from $215.6$ to $25.5$, enabling certification - the upper bound decreases from above $1$ at all SNRs to $0.08$.
  Over-the-air tests further confirm it: across the ten CIFAR-10 classes the worst observed error is $0.082$, against a certified bound of $0.128$.

  Applying this approach to attention-based and rate-adaptive successors, such as WITT~\cite{yang2023witt} and SwinJSCC~\cite{yang2024swinjscc}, as well as to modalities beyond images, remains an open research direction.
  These successors are based on vision transformers (ViTs), whose attention layers require abstract domains beyond the \relu{} and \prelu{} relaxations used in this work.
  Recent abstract domains developed for verifying large generative pre-trained transformer (GPT) models~\cite{duong2026zonogpt} provide a promising foundation for this extension.

\printbibliography

\begin{IEEEbiography}[{\includegraphics[width=1in,height=1.25in,clip,keepaspectratio]{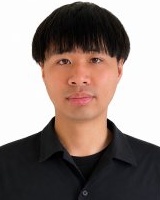}}]{THANH LE}{\space}received the B.E. degree in information and communication technology and the M.S. degree in data science from Hanoi University of Science and Technology, Vietnam, in 2019 and 2021, respectively, and the Ph.D. degree in informatics from The Graduate University for Advanced Studies, SOKENDAI, Tokyo, Japan, in 2025.
Since 2025, he has been a Researcher with the Wireless Systems Laboratory, National Institute of Information and Communications Technology (NICT), Japan.
His research focuses on bringing formal methods from software engineering to analyze, verify, and improve the robustness of AI-enabled wireless systems.
\end{IEEEbiography}

\begin{IEEEbiography}[{\includegraphics[width=1in,height=1.25in,clip,keepaspectratio]{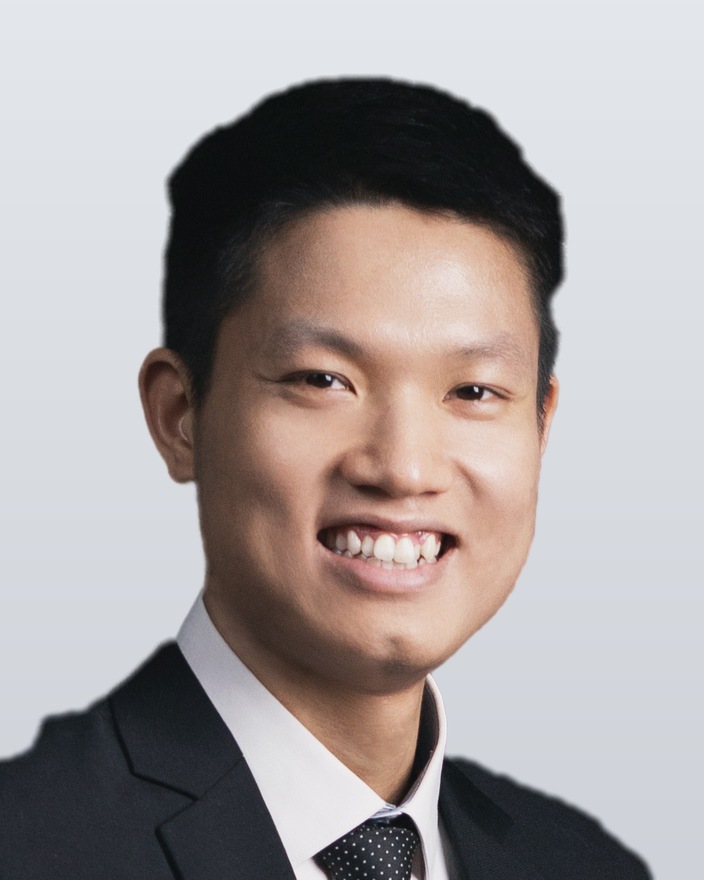}}]{HAI DUONG}{\space}received the B.S. and M.S. degrees in electrical engineering from Hanoi University of Science and Technology, Hanoi, Vietnam, in 2019 and 2021, respectively.
He is currently pursuing the Ph.D. degree in computer science with George Mason University, Fairfax, VA, USA, where he works at the ROARS Lab on deep neural network verification.
He develops NeuralSAT, a neural network verification tool that ranked second overall in VNN-COMP 2024 and 2025.
His research interests include DNN verification and testing, proof generation and checking, formal methods, and automated reasoning.
\end{IEEEbiography}

\begin{IEEEbiography}[{\includegraphics[width=1in,height=1.25in,clip,keepaspectratio]{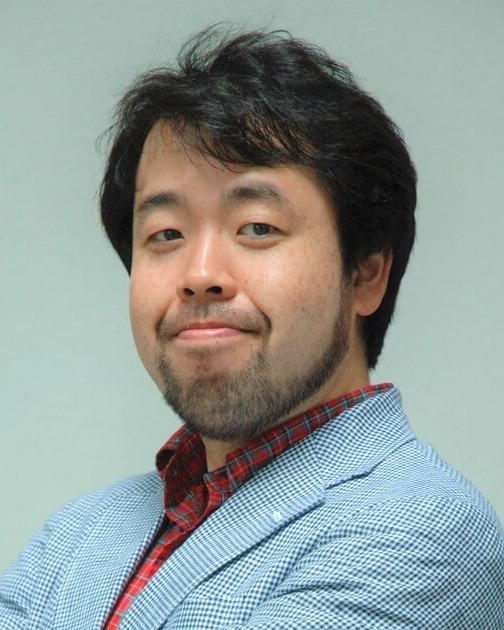}}]{TAKESHI MATSUMURA}{\space}received the M.S. and Ph.D. degrees in Electronic Engineering and Nano-mechanics Engineering from Tohoku University in 1998 and 2020, respectively.
From 1998 to 2007, he was engaged in the R\&D of wireless communications devices in several private companies.
He joined National Institute of Information and Communications Technology (NICT) in 2007, focusing on white-space and 5G systems.
From 2016 to 2019, he has served as Associated Professor at Graduate School of Informatics, Kyoto University.
He returned to NICT in April to take on the role of Research Manager and became Director of Wireless Systems Laboratory in 2021.
He now serves as Research Executive Director at Network Research Institute and Project Manager of the Moonshot Goal 1 project.
His research interests include white-space communication, wide-area wireless networks, 5G systems, and reliability-ensuring platform for cybernetic avatar teleoperation.
He is a member of IEICE and IEEE.
\end{IEEEbiography}

\begin{IEEEbiography}[{\includegraphics[width=1in,height=1.25in,clip,keepaspectratio]{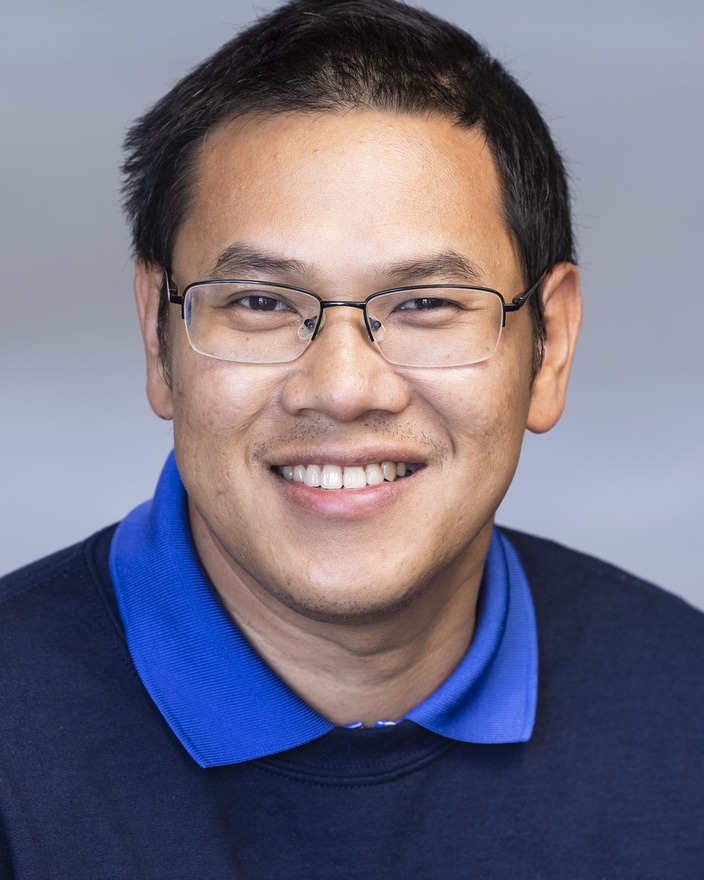}}]{THANHVU NGUYEN}{\space}received the B.S. and M.S. degrees in computer science from The Pennsylvania State University, USA, in 2003 and 2006, respectively, and the Ph.D. degree in computer science from the University of New Mexico, Albuquerque, USA, in 2014.
He was a Postdoctoral Researcher with the University of Maryland, College Park, from 2014 to 2016, and an Assistant Professor with the University of Nebraska--Lincoln from 2016 to 2021.
He is currently an Associate Professor of Computer Science and the Director of the M.S. Software Engineering program at George Mason University, USA.
He is a recipient of the NSF CAREER Award, the NSF CRII Award, two Amazon Research Awards, and three test-of-time paper awards, including the IEEE TSE Most Influential Paper Award and the ACM SIGSOFT ICSE 10-Year Most Influential Paper Award.
He is a Senior Member of ACM and IEEE.
His research lies at the intersection of software engineering and formal methods, focusing on AI safety and program correctness.
\end{IEEEbiography}

\end{document}